\newcommand{\algoBuild}{\textsf{Build\-\histname}}
\newcommand{\algoRefineOneDim}{\textsf{Refine\-Bin1D}}
\newcommand{\algoRefineTwoDim}{\textsf{Refine\-Bin2D}}
\newcommand{\algoIsUniform}{\textsf{IsUniform}}
\newcommand{\histfunc}{\textsf{Hist}}
\newcommand{\ndims}{d}
\newcommand{\nrows}{N}
\newcommand{\minpts}{M}
\newcommand{\testsignificance}{\alpha}
\newcommand{\nsamples}{N_s}
\newcommand{\sr}{\rho}
\newcommand{\edge}{e}
\newcommand{\edges}{\boldsymbol{\edge}}
\newcommand{\edgeinit}{\hat{\edge}}
\newcommand{\edgesinit}{\boldsymbol{\edgeinit}}
\newcommand{\edgesfromgd}{E_{0}}
\newcommand{\edgesrefined}[2]{\edges^{*{#2}}_{#1}}
\newcommand{\bincounts}{\boldsymbol{H}}
\newcommand{\nbins}{k}
\newcommand{\bincount}{h}
\newcommand{\binuniquecount}{u}
\newcommand{\binuniquecounts}{\boldsymbol{u}}
\newcommand{\binminmax}{v}
\newcommand{\binmin}[2]{\binminmax^{{#2}-}_{#1}}
\newcommand{\binmax}[2]{\binminmax^{{#2}+}_{#1}}
\newcommand{\binmins}[2]{\boldsymbol{\binminmax}^{{#2}-}_{#1}}
\newcommand{\binmaxs}[2]{\boldsymbol{\binminmax}^{{#2}+}_{#1}}
\newcommand{\binmidpoint}{c}
\newcommand{\binmidpoints}{\boldsymbol{c}}
\newcommand{\bincentreboundlow}[2]{c^{-{#2}}_{#1}}
\newcommand{\bincentreboundslow}[2]{\boldsymbol{c}^{-{#2}}_{#1}}
\newcommand{\bincentreboundshigh}[2]{\boldsymbol{c}^{+{#2}}_{#1}}
\newcommand{\dataset}{\mathcal{D}}
\newcommand{\datasetsample}{\boldsymbol{D}}
\newcommand{\datasetattribute}{X}
\newcommand{\datacolumn}{\boldsymbol{x}}
\newcommand{\datacolumns}{\boldsymbol{X}}
\newcommand{\datasetunique}{U}
\newcommand{\nunique}{N_{U}}
\newcommand{\datasetuniques}{\boldsymbol{\datasetunique}}
\newcommand{\binwidth}{\Delta}
\newcommand{\subbinwidth}{\delta}
\newcommand{\nsubbins}{s}
\newcommand{\splitpoint}{z}
\newcommand{\chisq}[1]{\chi^2_{#1}}
\newcommand{\subbincount}{\hslash}
\newcommand{\subbincountexpected}{\hat{\subbincount}}
\newcommand{\subbincountdelta}{\epsilon}
\newcommand{\binidx}{t}
\newcommand{\subbinidx}{r}
\newcommand{\datatypequanta}{\mu}  
\newcommand{\storage}{S}
\newcommand{\bytesrequired}{m}
\newcommand{\lencount}{\ell_{\bincount}}
\newcommand{\countstorageindicator}{\mathcal{I}_{\bincount}}
\newcommand{\nbinsnonzero}{\theta}
\newcommand{\golomb}{\mathcal{G}}
\newcommand{\coverage}{\beta}
\newcommand{\coverages}{\boldsymbol{\coverage}}
\newcommand{\weighting}{w}
\newcommand{\weightings}{\boldsymbol{\weighting}}
\newcommand{\weightingplaceholder}{\weighting^{\bullet}}
\newcommand{\weightingsplaceholder}{\weightings^{\bullet}}
\newcommand{\binkey}{t^{*}}
\newcommand{\fractomedian}{f_{\binkey}}
\newcommand{\nsubbinsfullycovered}{a}
\newcommand{\nsubbinspartiallycovered}{b}
\newcommand{\varboundsx}{\xi}
\newcommand{\varboundsxs}{\boldsymbol{\varboundsx}}
\newcommand{\prob}{\mathsf{Pr}}
\newcommand{\predicate}{P}
\newcommand{\predicates}{\boldsymbol{\predicate}}
\newcommand{\npredicates}{n}
\newcommand{\predicateidx}{\ell}
\newcommand{\colvar}{x}
\newcommand{\partialbincount}[2]{h_{#1|#2}}
\newcommand{\nsubbinscovered}{\bar{\nsubbins}}
\newcommand{\ciweight}{z_{0.98}}
\newcommand{\histname}{Pairwise\-Hist}
\newcommand{\low}{\scalebox{0.8}{$L$}}
\newcommand{\high}{\scalebox{0.8}{$R$}}
\newcommand{\myscale}[1]{\scalebox{0.8}{#1}}
\newcommand{\tablecaptiongapskrinkage}{-0.6em}
\DeclarePairedDelimiter\ceil{\lceil}{\rceil}
\DeclarePairedDelimiter\floor{\lfloor}{\rfloor}
\DeclarePairedDelimiter\norm{\lVert}{\rVert}
\newcommand{\lonenorm}[1]{\norm{#1}_{1}}
\newcommand{\setsep}{\,\big|\,}
\newcommand{\dotprod}{\cdot}
\newcommand{\hadamardproduct}{\odot}
\newcommand{\hadamarddivision}{\oslash}
\newcommand{\datacdfdbestpp}{data/error_percentiles_dbestpp_queries.csv}
\newcommand{\datacdfdeepdb}{data/error_percentiles_deepdb_queries.csv}
\newcommand{\datacdfall}{data/error_percentiles_all_queries.csv}
\newcommand{\datarealdatasetsacccuracy}{data/real_world_datasets_accuracy.csv}
\newcommand{\datarealdatasetsstorage}{data/real_world_datasets_storage.csv}
\newcommand{\dataForPairwiseHistFigure}{data/pairwisehist_figure_data.csv}
\newcommand{\dataForPairwiseHistFigureOne}{data/pairwisehist_figure_data_2d_0_1.csv}
\newcommand{\dataForPairwiseHistFigureTwo}{data/pairwisehist_figure_data_2d_0_2.csv}
\newcommand{\dataForPairwiseHistFigureThree}{data/pairwisehist_figure_data_2d_0_3.csv}
\newcommand{\dataForPairwiseHistFigureFour}{data/pairwisehist_figure_data_2d_1_2.csv}
\newtheorem{theorem}{Theorem}
\definecolor{colourAlgoComment}{HTML}{787878}
\definecolor{colour1}{HTML}{006ed4}  
\definecolor{colour2}{HTML}{e41a1c}  
\definecolor{colour3}{HTML}{4daf4a}  
\definecolor{colour4}{HTML}{984ea3}  
\definecolor{colour5}{HTML}{ff7f00}
\definecolor{colour6}{HTML}{a65628}
\definecolor{colour7}{HTML}{f781bf}
\definecolor{colour8}{HTML}{999999}
\renewcommand{\arraystretch}{1.2}
\newcommand{\algowidth}{1.3\columnwidth}
\newcommand{\algocomment}[1]{\hfill\textcolor{colourAlgoComment}{// \textit{#1}}}  
\algnewcommand{\LineComment}[1]{\textcolor{colourAlgoComment}{// \textit{#1}}}  
\algnewcommand{\IfThen}[2]{
    \State \algorithmicif\ #1\ \algorithmicthen\ #2}
\newcounter{algoCounter}  
\newcommand{\showAlgoCounter}[2]{\noalign{\refstepcounter{algoCounter}#1}\textbf{Algorithm~\thealgoCounter} \textsf{#2}}  
\newcommand{\algolinespacing}{1.12}
\tikzset{
	every pin/.style={
		pin edge={black,thick},
		font=\footnotesize
	},
}
\newcommand{\graphStandardWidth}{0.75\columnwidth}
\newcommand{\graphStandardHeight}{0.65\columnwidth}
\pgfplotsset{
	compat=1.17,
	every axis/.style={
        axis on top,
		font=\footnotesize,
		table/col sep=comma,
        xtick pos=bottom,
        ytick pos=left,
		log origin=infty,  
        legend cell align={left},
        legend style={
            legend pos=north west,
            align=left,
            row sep=0em,
            column sep=0.2em,
            font=\tiny,
            fill opacity=0.75,
            text opacity=1,
            /tikz/nodes={inner sep=0.1em},
            /tikz/every odd column/.style={yshift=0.1em},
        },
		title style={yshift=-0.6em},
        scale only axis,  
        width=\graphStandardWidth, height=\graphStandardHeight,
        xticklabel style={text height=0.5em},  
	},
    my bar plot/.style={
        width  = 0.42*\columnwidth,
        height = 0.13*\textwidth,
        xtick = data,
        scaled y ticks = false,
        enlarge x limits=0.25,
        bar width=6.5pt,
        ybar=0,
        ymin=0,
        area legend,
        legend image post style={scale=0.6},
        cycle list={
            {white, fill=colour3!70},
            {white, fill=colour3!70, postaction={pattern=north east lines, pattern color=white}},
            {white, fill=colour1!70},
            {white, fill=colour1!70, postaction={pattern=north east lines, pattern color=white}},
            {white, fill=colour2!70},
            {white, fill=colour2!70, postaction={pattern=north east lines, pattern color=white}},
            {white, fill=colour4!70},
            {white, fill=colour4!70, postaction={pattern=north east lines, pattern color=white}},
            {white, fill=colour5!70},
            {white, fill=colour5!70, postaction={pattern=north east lines, pattern color=white}},
        }
    },
    parameter sensitivity plots/.style={
        width  = 0.42*\columnwidth,
        height = 0.33*\columnwidth,
        xtick = data,
        xlabel = {Min. points, $ \minpts $},
        ymin=0,
        symbolic x coords={1000, 2000, 3000, 4000, 5000, 6000, 7000, 8000, 9000, 10000},
        xtick={1000,4000,7000,10000},
        xticklabels={1\,000, 4\,000, 7\,000, 10\,000},
        legend image post style={scale=0.8},
        cycle list={
            {colour1!70, mark options={draw=white,thick,scale=1.12,opacity=1}, mark=*},
            {colour2!70, mark options={draw=white,thick,scale=1.0,opacity=1}, mark=square*},
        },
        line width=0.5pt,
    },
    evaluation bar plots/.style={
        my bar plot,
        width = 0.42*\columnwidth, height = 0.31*\columnwidth,
        enlarge x limits=0.6,
        xtick = {1,2},
        xticklabels = {Power, Flights},
    },
    dataset grouped bar plot/.style={
        my bar plot,
        scale only axis=false,
        width=1.6\columnwidth,
        height=0.5\columnwidth,
        symbolic x coords={Aqua, Basement, Build, Current, Flights, Furnace, Gas, Light, Power, Taxis, Temp},
        x tick label style={rotate=30, anchor=north east, inner sep=0.3em, xshift=0.2em},
        log ticks with fixed point,
        minor tick style={draw=none},
        enlarge x limits=0.075,
        bar width=3.5pt,
        transpose legend=true,
        legend columns=2,
        legend style={
            legend pos=north east,
        },
    },
    cdf plots/.style={
        width = 0.19*\textwidth, height = 0.16*\textwidth,
        ymin=0, ymax=100,
        xmin=1e-3, xmax=1e2,
        ylabel={Percentile},
        xlabel={Relative error},
        xtick={1e-3, 1e-2, 1e-1, 1e-0, 1e1, 1e2},
        xtick align=outside,
        minor tick style={draw=none},
        legend columns=1,
        legend image post style={scale=0.4},
        legend style={
            align=left,
            row sep=0em,
            font=\tiny,
            /tikz/nodes={inner sep=0.1em},
            /tikz/every even column/.append style={column sep=0.3em},
            /tikz/every odd column/.append style={column sep=0.1em, fill opacity=1},
            anchor=north east,
            legend pos=north west,
            fill opacity=0.75,
            text opacity=1,
        },
        cycle list={
            {colour3!80, thick},
            {colour3!80, thick, densely dashed},
            {colour1!80, thick},
            {colour1!80, thick, densely dashed},
            {colour2!80, thick},
            {colour2!80, thick, densely dashed},
        },
        line width=0.5pt,
    },
	/pgf/declare function={Floor(\x) = round(\x-0.49);},  
	custom line plot style/.style={
		cycle list={
			{colour1!80, thick, mark options={draw=white,thick,opacity=1,scale=1.3}, mark=*},
			{colour2!50, thick, mark options={draw=white,thick,opacity=1,scale=1.15}, mark=square*},
			{colour2!90, thick, mark options={draw=white,thick,opacity=1,scale=1.7}, mark=triangle*},
			{colour3!50, thick, mark options={draw=white,thick,opacity=1,scale=1.4}, mark=pentagon*},
			{colour3!100, thick, mark options={draw=white,thick,opacity=1,scale=1.8}, mark=diamond*},
		},
		line width=0.5pt,
		legend style={
			align=left,
			column sep=0.1em,
			/tikz/every odd column/.style={yshift=0.1em},
			/tikz/nodes={inner sep=0.1em},
		},
	},
	custom scatter plot style/.style={
		only marks,
		cycle list={
			{colour1,mark options={scale=1.0},mark=*},
			{colour2,mark options={scale=1.0},mark=*},
			{colour3,mark options={scale=1.0},mark=*},
			{colour4,mark options={scale=1.0},mark=*},
			{colour5,mark options={scale=1.0},mark=*},
			{colour6,mark options={scale=1.0},mark=*},
			{colour7,mark options={scale=1.0},mark=*},
			{colour8,mark options={scale=1.0},mark=*}
		},
	}
}
\begin{document}

	\title{\histname{}: Fast, Accurate and Space-Efficient Approximate Query Processing with Data Compression}

    \author{
        Aaron Hurst, Daniel E. Lucani and Qi Zhang
        \\
        \footnotesize DIGIT \& Department of Electrical and Computer Engineering \\[-0.2cm]
        \footnotesize Aarhus University, Denmark\\[-0.2cm]
        \footnotesize \{ah, daniel.lucani, qz\}@ece.au.dk
    }

	\maketitle

	\begin{abstract}
    Exponential growth in data collection is creating significant challenges for data storage and analytics latency.
    Approximate Query Processing (AQP) has long been touted as a solution for accelerating analytics on large datasets, however, there is still room for improvement across all key performance criteria.
    In this paper, we propose a novel histogram-based data synopsis called \histname{} that uses recursive hypothesis testing to ensure accurate histograms and can be built on top of data compressed using Generalized Deduplication (GD).
    We thus show that GD data compression can contribute to AQP.
    Compared to state-of-the-art AQP approaches, \histname{} achieves better performance across all key metrics, including 2.6$ \times $ higher accuracy, 3.5$ \times $ lower latency, 24$ \times $ smaller synopses and 1.5--4$ \times $ faster construction time.
\end{abstract}

    \section{Introduction}
\label{sec:introduction}

Rapidly advancing digital transformation is driving exponential growth in data volumes across many sectors.
This poses significant challenges for current infrastructure and data management solutions, which must handle swelling data workloads increasingly advanced analytics.
Efficient data storage and analytics techniques are therefore crucial.


Approximate Query Processing (AQP) is a well-established field that focuses on enabling fast analytics on Big Data by sacrificing some degree of accuracy~\cite{Li_2018, Akash_2022}.
AQP techniques are typically based on either sampling or data synopses, or a hybrid of the two~\cite{Liang_2021,Li_2018,Peng_2018,Park_2018}.
In general, small samples or compact synopses enable analytics to be performed over large datasets within required latency constraints.
However, all AQP approaches exhibit a distinct trade-off between accuracy, latency and synopsis size.


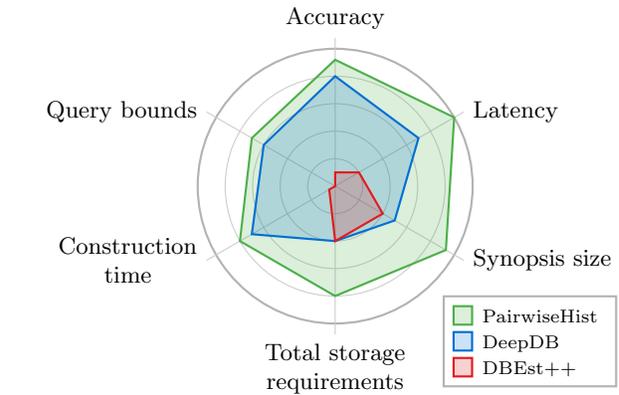
\begin{figure}[!t]
    \centering
    \newcommand{\radarradius}{5.2em}

\begin{tikzpicture}[
    grid/.style={black!20},
    metric label/.style={font=\small, black, anchor=west, align=center},
    legend item/.style={
        minimum height=0.7em,
        minimum width=0.7em,
        thick,
        draw,
        fill opacity=0.2,
        anchor=north,
        inner sep=0,
    }
    ]

    \draw[grid, thick, black!30] (0,0) circle (\radarradius);
    \draw[grid] (0,0) circle (0.8*\radarradius);
    \draw[grid] (0,0) circle (0.6*\radarradius);
    \draw[grid] (0,0) circle (0.4*\radarradius);
    \draw[grid] (0,0) circle (0.2*\radarradius);

    \draw[grid] (0,0) -- ( 30:1.08*\radarradius) node[metric label] {Latency};
    \draw[grid] (0,0) -- ( 90:1.08*\radarradius) node[metric label, anchor=south] {Accuracy};
    \draw[grid] (0,0) -- (150:1.08*\radarradius) node[metric label, anchor=east] {Query bounds};
    \draw[grid] (0,0) -- (210:1.08*\radarradius) node[metric label, anchor=east] {Construction\\time};
    \draw[grid] (0,0) -- (270:1.08*\radarradius) node[metric label, anchor=north] {Total storage\\requirements};
    \draw[grid] (0,0) -- (330:1.08*\radarradius) node[metric label] {Synopsis size};


    \draw[colour3, thick, fill=colour3, fill opacity=0.2]
    ( 30:1.0*\radarradius) 
    -- ( 90:0.92*\radarradius) 
    -- (150:0.7*\radarradius) 
    -- (210:0.8*\radarradius) 
    -- (270:0.8*\radarradius) 
    -- (330:0.93*\radarradius) 
    -- cycle;

    \draw[colour1, thick, fill=colour1, fill opacity=0.2]
    ( 30:0.7*\radarradius)  
    -- ( 90:0.8*\radarradius)  
    -- (150:0.6*\radarradius)  
    -- (210:0.7*\radarradius)  
    -- (270:0.4*\radarradius)  
    -- (330:0.5*\radarradius)  
    -- cycle;

    \draw[colour2, thick, fill=colour2, fill opacity=0.2]
    ( 30:0.2*\radarradius)  
    -- ( 90:0.1*\radarradius)  
    -- (150:0.0*\radarradius)  
    -- (210:0.05*\radarradius)  
    -- (270:0.4*\radarradius)  
    -- (330:0.4*\radarradius)  
    -- cycle;


    \node[legend item, colour3, fill=colour3] (legend1) at (317:1.27*\radarradius) {};
    \node[legend item, colour1, fill=colour1] (legend2) at ([yshift=-0.2em]legend1.south) {};
    \node[legend item, colour2, fill=colour2] (legend3) at ([yshift=-0.2em]legend2.south) {};

    \node[font=\scriptsize, anchor=west, text depth=0.1em, text height=0.6em] (label1) at (legend1.east) {\histname};
    \node[font=\scriptsize, anchor=west, text depth=0.1em, text height=0.6em] (label2) at (legend2.east) {DeepDB};
    \node[font=\scriptsize, anchor=west, text depth=0.1em, text height=0.6em] (label3) at (legend3.east) {DBEst++};

    \node[fit=(legend1)(label1.east)(legend3), draw, thick, black!30] () {};

\end{tikzpicture}
    \vspace{-0.2em}
    \caption{
        Relative performance comparison of \histname{}, DeepDB and DBEst++ summarising \cref{fig:real_datasets_results,fig:error,fig:evaluation_other} and \autoref{tab:bounds}.
        Outer rings indicate better performance.
        Each interval represents approximately 2$ \times $ improvement.
    }
    \label{fig:performance_radar}
\end{figure}


In this paper, we propose a novel histogram-based data synopsis for AQP called \histname{} that consists of three key components:
i)~one-dimensional histograms that capture within-column data distributions,
ii)~two-dimensional histograms that capture relationships between each pair of columns (hence, \textit{Pairwise}Hist), and
iii)~small metadata for each histogram bin that enhance query precision, including the minimum, maximum and number of unique values.
By using histograms, \histname{} inherits desirable properties including accurate query bounds and effective outlier recall, while the small number of histograms minimises synopsis size.
All histograms within \histname{} are constructed using recursive hypothesis testing that ensures each bin contains uniformly distributed data, leading to high query accuracy.
Low query execution latency is realised by novel methods for resolving multi-predicate queries that require only a few relatively small matrix multiplications.
Overall, \histname{} delivers all-round superior performance across accuracy, latency, synopsis size, synopsis construction time and query bounds compared to state-of-the-art AQP techniques, as illustrated in \autoref{fig:performance_radar}.


\histname{} takes inspiration from recent works in data compression that demonstrate (approximate) data clustering can be performed directly on compressed data without decompression~\cite{Hurst_2021,Hurst_2022,Hurst_2024}.
That is, by using Generalized Deduplication (GD) data compression~\citep{Vestergaard_2019a,Vestergaard_2019b,Vestergaard_2020,Vestergaard_2021}, part of the compressed data known as \textit{bases} can serve as a data synopsis on which analytics tasks can be performed efficiently.
Compared to GD bases, \histname{} significantly reduces storage requirements and improves accuracy by using low-dimensional histograms with variable precision.


While \histname{} is a stand-alone AQP technique in its own right, we also propose implementing it alongside data compression, as illustrated in \autoref{fig:system}.
Specifically, we use GreedyGD~\cite{Hurst_2024}, a recent version of GD, which is a lossless data compression algorithm that offers state-of-the-art compression ratios and low random access cost.
While originally designed for IoT, GD has also proved effective in several other domains~\citep{Hurst_2024, Sehat_2022,Nielsen_2019,Feher_2022,Goettel_2020}.
In our proposed AQP framework, GreedyGD both reduces overall storage requirements and accelerates synopsis construction.


\begin{figure*}[!t]
    \centering
    \input{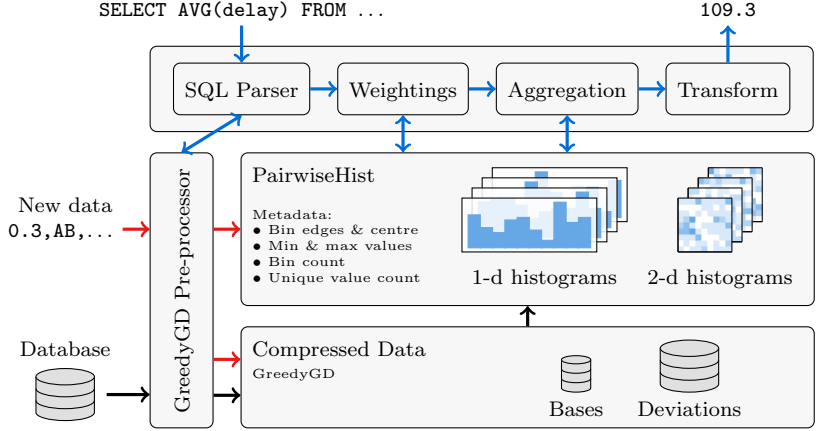}
    \vspace{-0.2em}
    \caption{
        Our proposed AQP framework with compression, including data ingestion and \histname{} construction (black arrows), query execution (blue) and data updates (red).
    }
    \label{fig:system}
    \vspace{0em}
\end{figure*}


Due to its small synopsis size and low query latency, our approach also enables Edge analytics, even on resource-constrained devices.
This comes with many benefits, including low (communication) latency, scalability, privacy, energy savings and mobility~\cite{Zhang_2014,Kong_2022,Nayak_2022}.
Meanwhile, \histname{}'s fast construction and low storage requirements reduce Cloud storage and computing costs and enable more frequent updates.


In summary, the key contributions of this paper include:
\begin{enumerate}
    \item A novel histogram-based AQP technique called \histname{} that uses a combination of one- and two-dimensional histograms and an efficient storage encoding to deliver data synopses that are 24$ \times $ smaller and up to 4$ \times $ faster to construct than state-of-the-art AQP methods DeepDB~\cite{Hilprecht_2020} and DBEst++~\cite{Ma_2021},

    \item Query execution techniques for seven common aggregation functions that deliver at least 2.6$ \times $ better accuracy (0.28\% median error vs. 0.73--28.9\%), 3.5$ \times $ lower latency and more accurate bounds than the state-of-the-art,

    \item A novel AQP framework that integrates data compression to reduce overall storage requirements by up to 4.3$ \times $, and

    \item A comprehensive performance evaluation of \histname{} in comparison to the state-of-the-art across 11 real-world datasets and two datasets scaled-up using IDEBench~\cite{Eichmann_2020}.
\end{enumerate}


The remainder of this paper is structured as follows.
\autoref{sec:related_work} describes related work.
\autoref{sec:system} provides our system overview.
\autoref{sec:histogram} outlines the \histname{} data structure and construction algorithm.
\autoref{sec:query_execution} defines the theory and mathematical formulations for query execution.
\autoref{sec:evaluation} evaluates the performance of \histname{}.
Finally, \autoref{sec:conclustion} concludes the paper.


	\section{Related Work}
\label{sec:related_work}

AQP methods are typically classified as either sampling-based or synopses-based~\citep{Akash_2022}.
Sampling approaches can be either offline or online~\cite{Chaudhuri_2017}.
Online methods select samples at query time and attempt to minimise the amount of data that must be accessed to achieve a desired accuracy.
Gapprox~\citep{Ahmadvand_2019}, for example, uses cluster sampling to reduce processing. 
Offline approaches, on the other hand, attempt to prepare in advance the most representative sample for the most queries.
For example, \citet{Babcock_2003} use biased sampling and BlinkDB~\citep{Agarwal_2013} uses stratified sampling.
More recent offline sampling work has focused on additional features, such as integration middleware~\citep{Park_2018} and increasing sample re-usability~\citep{Sanca_2023}.
However, sampling methods generally struggle with skewed data and may support limited aggregation functions~\cite{Li_2018}.

Synopsis-based approaches focus on building a compact summary of the data using statistical or machine learning techniques.
Histograms are a classical synopsis approach that is widely used in selectivity estimation, which involves estimating the number of tuples that a query will access.
This is an important step in database query optimisers and is equivalent to AQP for COUNT queries.
Many histogram algorithms exist, including simple equi-width and equi-depth histograms, as well as advanced methods, such as V-optimal histograms~\citep{Ioannidis_1995}, entropy-based histograms~\citep{To_2013} and others~\cite{Acharya_2015,Diakonikolas_2018,Singla_2022}.
Most approaches, however, are limited to one dimension.
Multi-dimensional histograms are notoriously challenging to construct and their storage scales exponentially with the number of dimensions~\citep{Cormode_2011,Zhang_2021a}.
Nonetheless, some multi-dimensional histograms for selectivity estimation are available, including DigitHist~\citep{Shekelyan_2017}, DMMH~\citep{Zhang_2021a} and STHoles~\citep{Bruno_2001}.
To avoid the pitfalls of high dimensionality, DigitHist uses lossy compression, DMMH uses density modelling and STHoles uses previous query results.
\citet{Cormode_2011} also discuss using collections of histograms, which is the approach we have taken with \histname{}.
Common to all histograms is that accuracy depends on ensuring a uniform distribution of tuples within each bin.
To the best of our knowledge, our approach is the first to utilise hypothesis testing for this purpose.

Most recent synopsis-based AQP works focus on machine learning.
These can be classified as either generative, which create synthetic data for evaluating queries, or inferential, which directly predict AQP results~\cite{Lee_2022}.
In \citep{Kulessa_2018,Kulessa_2019}, a generative model that captures the joint probability distribution of a database using Mixed Sum-Product Networks (MSPNs) is proposed.
This approach can evaluate simple queries directly using the MSPN weights or generate (synthetic) samples to answer more complex queries.
While this provides high accuracy, low latency and compact summaries, it requires significant construction time (hours for just $ 10^6 $ samples).

DeepDB~\cite{Hilprecht_2020} proposes Relational Sum Product Networks (RSPNs), which are purely inferential and extend (M)SPNs to support multiple database tables, complex queries and model updates.
In comparison to \citep{Kulessa_2018}, DeepDB achieves much faster construction, but has poor latency with multi-predicate queries, higher storage requirements and can give imprecise bounds.
Our evaluation also revealed that DeepDB does not support OR relationships between predicates, despite claiming to, and performs poorly on real-world data.

DBEst~\cite{Ma_2019} is an inferential approach that uses a combination of model types.
Kernel density estimators model individual columns, while regression models capture relationships between pairs of columns.
This is similar to \histname{} with the density and regression models corresponding to one- and two-dimensional histograms, respectively.
A significant limitation of DBEst is that a new model is required for every query template, which limits flexibility and exponentially increases synopsis storage requirements.

DBEst++~\cite{Ma_2021} improves DBEst by switching to mixture density networks for both regression and density modelling.
This significantly reduces storage requirements, delivers better accuracy and latency, and supports data updates.
However, storage requirements for DBEst++ are misleading due to each model template requiring its own model.
Our tests also revealed that DBEst++ does not support queries involving more than two columns, OR relationships between predicates, queries on only categorical columns or inequality predicates on date/time columns.

Other machine learning approaches include LAQP~\citep{Zhang_2021}, which combines an error prediction model with sampling, Electra~\citep{Sheoran_2022}, which focuses on queries with many-predicates, NeuroSketch~\citep{Zeighami_2023}, which is a bounded inferential model trained on queries, and Generative Adversarial Networks~\citep{Fallahian_2022}.
Typically, machine learning methods are constructed from samples and exhibit similar limitations as sampling-based AQP methods, namely vulnerability to skewed data and limited aggregation function support.

A small number of unified approaches that combine online sampling and synopses have also been proposed.
For example, AQP++~\cite{Peng_2018} generates a set of pre-computed aggregations known as a prefix cube and answers queries by supplementing relevant prefix cube elements with sampling.
A similar approach has also been proposed for selectivity estimation~\cite{M_ller_2018}.
More recently, \citep{Liang_2021} proposed PASS, which has a more flexible synopsis design, reminiscent of  DigitHist~\cite{Shekelyan_2017}, that consists of a hierarchical set of pre-computed aggregates at different resolutions that guide online stratified sampling.
Unlike other sampling-based approaches, PASS provides both probabilistic and deterministic bounds.
However, a significant limitation is nearly 30$ \times $ slower construction than AQP++~\cite{Liang_2021}, which itself requires over 20~minutes for just a \qty{50}{MB} sample~\cite{Peng_2018}.

Many AQP approaches provide query error bounds, which give analysts an indication of the confidence that they can place in AQP results.
Bounds are typically based on probabilistic confidence intervals~\cite{Li_2018,Zeng_2014}, but can also be deterministic~\citep{Agarwal_2013,Li_2019,Ahmadvand_2019}.
Unfortunately, probabilistic bounds can often be incorrect~\citep{Agarwal_2014} while deterministic bounds may be too broad to be useful.

\begin{table*}[!t]
    \centering
    \caption{
        \histname{} compared to previous AQP works.
    }
    \footnotesize
    \newcommand*{\belowrulesepcolor}[1]{%
    \noalign{%
        \kern-\belowrulesep
        \begingroup
        \color{#1}%
        \vspace{0.1\lightrulewidth}
        \hrule height \belowrulesep
        \endgroup
    }%
}
\newcommand*{\aboverulesepcolor}[1]{%
    \noalign{%
        \begingroup
        \color{#1}%
        \hrule height\aboverulesep
        \endgroup
        \kern-\aboverulesep
    }%
}

\setlength\tabcolsep{2.5pt}

\begin{tabular}{lcccccc}
    \toprule
                                                                             & \textbf{Accuracy} & \textbf{Latency} & \textbf{Bounds} &  \textbf{Size}  & \textbf{Build} & \textbf{Versatility} \\ \midrule
    \belowrulesepcolor{colour3!15}\rowcolor{colour3!15} \textbf{\histname{}} & \textbf{$ < $1\%} & \textbf{sub-ms}  &  \textbf{yes}   & \textbf{sub-MB} & \textbf{secs}  &  \textbf{very high}  \\
    \aboverulesepcolor{colour3!15} \midrule
    VerdictDB~\citep{Park_2018}  &        1\%        &     seconds      &  \textbf{yes}   &       GBs       &       ?        &  \textbf{very high}  \\
    Gapprox~\citep{Ahmadvand_2019}                                           &     $ < $5\%      &     seconds      &  \textbf{yes}   &       n/a       &      n/a       &         low          \\
    BlinkDB~\citep{Agarwal_2013}                                             &    $ < $10 \%     &     seconds      &  \textbf{yes}   &       GBs       &      n/a       &         high         \\ \midrule
    DigitHist~\citep{Shekelyan_2017}                                         &        1\%        & \textbf{sub-ms}  &  \textbf{yes}   &       MBs       &      mins      &       very low       \\
    DMMH~\citep{Zhang_2021a}                                                 &      1--2\%       &        ms        &       no        & \textbf{sub-MB} & \textbf{secs}  &       very low       \\
    STHoles~\citep{Bruno_2001}                                               &       10\%        &        ?         &       no        & \textbf{sub-MB} &       ?        &       very low       \\ \midrule
    DeepDB~\citep{Hilprecht_2020}                                            &        1\%        &        ms        &  \textbf{yes}   &       MBs       &      mins      &         high         \\
    DBEst++~\cite{Ma_2021}                                                   &    1\%$ ^{*} $    &        ms        &       no        &       MBs       &     hours      &         low          \\
    NeuroSketch~\citep{Zeighami_2023}                                        &        5\%        & \textbf{sub-ms}  &  \textbf{yes}   & \textbf{sub-MB} &      mins      &  \textbf{very high}  \\
    LAQP~\citep{Zhang_2021}                                                  &       10\%        &        ms        &       no        & \textbf{sub-MB} &       ?        &  \textbf{very high}  \\
    Electra~\citep{Sheoran_2022}                                             &       10\%        &        ?         &       no        &        ?        &       ?        &         low          \\ \midrule
    PASS~\citep{Liang_2021}                                                  & \textbf{$ < $1\%} &        ms        &  \textbf{yes}   &       MBs       &      mins      &         high         \\
    AQP++~\citep{Peng_2018}                                                  & \textbf{$ < $1\%} &     seconds      &  \textbf{yes}   &       MBs       &      mins      &         high         \\ \bottomrule
    \multicolumn{6}{l}{\scriptsize ``?'' indicates not reported by the authors. $ ^* $Much larger error observed in practice.}                                           &
\end{tabular}



    \label{tab:stateoftheart}
\end{table*}

The overall performance of state-of-the-art AQP techniques is summarised in \autoref{tab:stateoftheart}, in which versatility refers to the variety of supported query templates.
As can be seen, \histname{} delivers comprehensively superior performance.
Moreover, by using data compression, it uniquely offers significant overall storage reduction.

    \section{System Overview}
\label{sec:system}


\textbf{\textit{Problem Definition.}}
Consider a dataset $ \dataset $ with $ \nrows $ rows and $ \ndims $ attributes $ X_1, \ldots, \datasetattribute_{\ndims} $ and queries of the form:
\begin{align*}
    &\text{SELECT } F (\datasetattribute_{i}) \text{ FROM } \dataset \text{ WHERE } P_1 \text{ AND/OR } P_2 \ldots \\[-0.2em]
    &\text{GROUP BY } \ldots ;
\end{align*}
where $ F $ is an aggregation function (e.g. AVG), $ P_1, P_2, \ldots $ are predicate conditions of the form ``$ \datasetattribute_j \ OP \text{ LITERAL} $'', where $ OP $ is a binary logical operator (i.e., $ <, >, \le, \ge, = \text{ or } \ne $) and LITERAL is a valid value for column $ \datasetattribute_j $, and GROUP BY can be applied to any categorical column.
It is assumed that $ \dataset $ is large enough such that exact query execution is prohibitively expensive.
Therefore, the task is to design a framework such that bounded approximate query results can be obtained with high accuracy and low latency, while minimising synopsis size, synopsis construction time and overall storage requirements.
Missing values must also be supported.


\textbf{\textit{Data Compression.}}
In our proposed AQP framework, \histname{} is applied on top of compressed data, which reduces storage requirements.
As shown in \autoref{fig:system}, GreedyGD compresses incoming data, which includes pre-processing to improve compressability.
Pre-processing is applied to each column independently based on its data type and includes minimum value subtraction, floating point to integer conversion (e.g. 10.22 to 1022), frequency-ranked categorical value encoding (i.e., most common encoded as 0, second most as 1, etc.) and encoding missing values.
Importantly, pre-processing does not require additional storage or memory and datasets can be processed in arbitrarily-sized batches, which allows processing of datasets that are too large to fit in memory.


\begin{figure*}[!t]
    \centering
    \footnotesize

\newcommand{\verticalspacing}{\pgflinewidth}

\tikzset{
    standard node/.style={
        draw,
        fill=white,
        yshift=\verticalspacing,
        anchor=north,
        minimum height=1.6em,
        text height=1.0em,
        inner sep=0em,
        text depth=0.3em
    },
    chunk style/.style={
        standard node,
        minimum width=8.5em,
        fill=black!5,
    },
    base style/.style={
        standard node,
        minimum width=5.5em,
        fill=colour1!15,
    },
    id style/.style={
        standard node,
        minimum width=1.5em,
        fill=white,
    },
    deviation style/.style={
        standard node,
        minimum width=3.0em,
        fill=black!5,
    }
}

\begin{tikzpicture}
    \matrix[label={[text depth=0.3em]above:\textbf{Data}}, inner sep=0] (data) {
        \node[chunk style, yshift=\verticalspacing] (c1) {chunk 1};
        \node[chunk style] (c2) at (c1.south) {chunk 2};
        \node[chunk style] (c3) at (c2.south) {chunk 3};
        \node[chunk style] (c4) at (c3.south) {$ \ldots $};
        \node[chunk style] (c5) at (c4.south) {chunk $ n $}; \\
    };

    \matrix[label={[text depth=0.3em]above:\textbf{Bases \& Deviations}}, inner sep=0, anchor=west, xshift=3.0em] (splits) at (data.east) {
        \node[base style, anchor=west, yshift=-\verticalspacing] (b1) {base 1};
        \node[base style] (b2) at (b1.south) {base 2};
        \node[base style] (b3) at (b2.south) {base 1};
        \node[base style] (b4) at (b3.south) {$ \ldots $};
        \node[base style] (b5) at (b4.south) {base 2};

        \node[deviation style, anchor=west, xshift=-\pgflinewidth, yshift=-\verticalspacing] (d1) at (b1.east) {dev 1};
        \node[deviation style] (d2) at (d1.south) {dev 2};
        \node[deviation style] (d3) at (d2.south) {dev 3};
        \node[deviation style] (d4) at (d3.south) {$ \ldots $};
        \node[deviation style] (d5) at (d4.south) {dev $ n $}; \\
    };

    \matrix[label={[text depth=0.3em]above:\textbf{Deduplicated}}, inner sep=0, anchor=west, xshift=3.0em] (deduplicated) at (splits.east) {
        \node[base style, anchor=west, yshift=-\verticalspacing] (b1d) {base 1};
        \node[base style] (b2d) at (b1d.south) {base 2};
        \node[base style] (b3d) at (b2d.south) {$ \ldots $};

        \node[id style, anchor=west, yshift=-\verticalspacing, xshift=1em] (id1) at (b1d.east) {[1]};
        \node[id style] (id2) at (id1.south) {[2]};
        \node[id style] (id3) at (id2.south) {[1]};
        \node[id style] (id4) at (id3.south) {$ \ldots $};
        \node[id style] (id5) at (id4.south) {[2]};

        \node[deviation style, anchor=west, xshift=-\pgflinewidth, yshift=-\verticalspacing] (d1d) at (id1.east) {dev 1};
        \node[deviation style] (d2d) at (d1d.south) {dev 2};
        \node[deviation style] (d3d) at (d2d.south) {dev 3};
        \node[deviation style] (d4d) at (d3d.south) {$ \ldots $};
        \node[deviation style] (d5d) at (d4d.south) {dev $ n $}; \\
    };

    \draw[->, very thick, black] ([xshift=0.4em]data.east) to ([xshift=-0.4em]splits.west);
    \draw[->, very thick, black] ([xshift=0.4em]splits.east) to ([xshift=-0.4em]deduplicated.west);

    \draw[densely dashed] (b1d.east) -- (id1.west);
    \draw[densely dashed] (b1d.east) -- (id3.west);
    \draw[densely dashed] (b2d.east) -- (id2.west);
    \draw[densely dashed] (b2d.east) -- (id5.west);

\end{tikzpicture}
    \caption{
        GD splits data into bases and deviations.
    }
    \label{fig:gd}
\end{figure*}
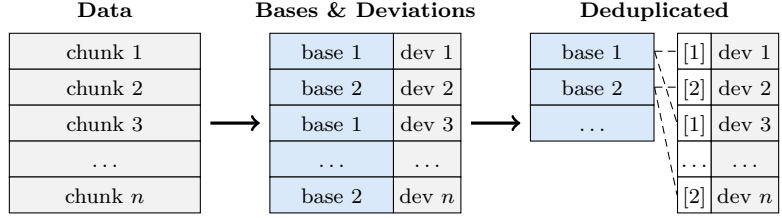

GreedyGD splits data \textit{chunks} into \textit{bases} and \textit{deviations}.
Bases contain the majority of the information and are deduplicated, while deviations are stored verbatim with IDs linking them to the appropriate bases, as illustrated in \autoref{fig:gd}.
Compression is achieved when there are few bases compared to the number of data chunks.
In our framework, a chunk corresponds to a row in a relational database table and bases contain the most significant bits from each attribute.
New rows can be added incrementally to the compressed data.


\textbf{\textit{\histname{}.}}
Once the data is compressed, \histname{} is built on top of the compressed data by taking (a sample of) the bases as input for the initial histogram bin edges.
The histograms are then refined using hypothesis testing to ensure that the distributions of tuples within individual bins are approximately uniform.
This process is described in detail in \autoref{sec:histogram}.
As illustrated in \autoref{fig:system}, \histname{} consists of one-dimensional histograms for each column, two-dimensional histograms for every pair of columns and various metadata for each histogram bin.
This collection of histograms enables rapid query aggregation on any column subject to arbitrary combinations of predicate conditions.
Furthermore, compared to a na\"ive multi-dimensional histogram, for which storage increases exponentially with the number of dimensions~\citep{Zhang_2021a}, our approach reduces storage requirements from $ O(\nbins^\ndims) $ to $ O(\ndims^2\nbins^2) $ for $ \ndims $ dimensions and $ \nbins $ bins per dimension.
A compact storage encoding for \histname{} is also proposed in \autoref{sec:histogram:storage}.

\histname{} can also support multi-table databases.
That is, queries across different tables can be resolved via two-dimensional histograms involving the primary/foreign keys.
However, in this paper we focus on single-table queries.

If \histname{} is used independently of GreedyGD, then histograms are constructed from scratch (i.e., without using bases as initial bin edges), with slightly longer synopsis construction time due to less precise initial conditions.

	\section{\histname{}}
\label{sec:histogram}

\begin{table}[!t]
    \centering
    \renewcommand{\arraystretch}{1.07}
    \caption{Notations}
    \footnotesize
    \begin{tabular}{ccl}
    \toprule
    \multirow{6}{*}{\rotatebox{90}{Construction}} &              $ \ndims $              & No. columns in the dataset                         \\
                                                    &              $ \nrows $              & No. rows in the dataset                            \\
                                                    &            $ \nsamples $             & No. samples used to construct \histname{}          \\
                                                    &               $ \sr $                & \histname{} sampling ratio, $ \nsamples / \nrows $ \\
                                                    &             $ \minpts $              & Minimum points for bins to be split                \\
                                                    &        $ \testsignificance $         & Significance level for hypothesis tests            \\ \midrule
     \multirow{7}{*}{\rotatebox{90}{Histograms}}  &           $ \nbins^{(i)} $           & No. histogram bins in column $ i $                 \\
                                                    &           $ \edges^{(i)} $           & Bin edges for column $ i $                         \\
                                                    &         $ \bincounts^{(i)} $         & Bin counts matrix for column $ i $                 \\
                                                    & $ \boldsymbol{\binminmax}^{(i)\pm} $ & Bin minimum and maximum values for column $ i $    \\
                                                    &       $ \binmidpoints^{(i)} $        & Bin midpoints for column $ i $                     \\
                                                    &      $ \binmidpoints^{(i)\pm} $      & Bin weighted centre bounds for column $ i $        \\
                                                    &      $ \binuniquecounts^{(i)} $      & No. unique values in each bin for column $ i $     \\ \midrule
     \multirow{4}{*}{\rotatebox{90}{Hypothesis}}  &            $ \binwidth $             & Bin width                                          \\
                                                    &           $ \subbinwidth $           & Sub-bin width                                      \\
                                                    &            $ \nsubbins $             & No. sub-bins in a given bin                        \\
                                                    &           $ \subbincount $           & Sub-bin count                                      \\ \midrule
      \multirow{5}{*}{\rotatebox{90}{Queries}}    &            $ \coverages $            & Bin coverage                                       \\
                                                    &           $ \weightings $            & Bin weightings                                     \\
                                                    &      $ \nsubbinsfullycovered $       & No. sub-bins fully covered by a query              \\
                                                    &    $ \nsubbinspartiallycovered $     & No. sub-bins fully or partially covered by a query \\
                                                    &             $ \binkey $              & Bin index of query result (MEDIAN/MIN/MAX)    \\ \bottomrule
\end{tabular}

    \label{tab:nomenclature}
\end{table}

\histname{} generates its collection of histograms by recursively refining (i.e., splitting) an initial set of histogram bins using hypothesis testing, which ensures that data points within individual bins are sufficiently uniformly distributed.
Refinement is terminated when a bin is either uniform or has too few points to be split further.
\histname{} is thus parameterised by the minimum number of points required for a bin to be split, $ \minpts $, the hypothesis test significance level,~$ \testsignificance $, and the number of samples used to construct \histname{}, $ \nsamples $.
Before describing \histname{} in detail, this section provides an overview of the key notation, which is summarised in \autoref{tab:nomenclature}.

Bin edges for one-dimensional histograms are denoted by the vector $ \edges^{(i)} = \left\langle \edge^{(i)}_{1}, \edge^{(i)}_{2}, \ldots, \edge^{(i)}_{\nbins^{(i)}} \right\rangle $, where the superscript $ (i) $ indicates the column to which the histogram applies and $ \nbins^{(i)} $ denotes the number of bins for the one-dimensional histogram for column $ i $.
For two-dimensional histograms, the bin edges for columns $ i $ and $ j $ are denoted $ \edges^{(i|j)} $ and $ \edges^{(j|i)} $, respectively.
This superscript notation for two-dimensional histograms, $ (i|j) $ and $ (j|i) $, indicates the fact that two-dimensional histograms may have additional bin edges in either or both dimensions compared to one-dimensional histograms due to additional refinement.
One-dimensional histogram bin counts are denoted by the $ \nbins^{(i)} \times \nbins^{(i)} $ diagonal matrix $ \bincounts^{(i)} $ where the diagonal elements contain the bin counts. That is:
\begin{equation}
    \bincounts^{(i)} = \begin{bmatrix}
        \bincount^{(i)}_1 & 0                 & \cdots & 0                              \\
        0                 & \bincount^{(i)}_2 & \cdots & 0                              \\
        \vdots            & \vdots            & \ddots & \vdots                         \\
        0                 & \cdots            & 0      & \bincount^{(i)}_{\nbins^{(i)}}
    \end{bmatrix}
\end{equation}
where $ \bincount^{(i)}_{\binidx} $ denotes the bin count for bin $ \binidx $.
Two-dimensional histogram bin counts are similarly denoted by the $ \nbins^{(i|j)} \times \nbins^{(j|i)} $ matrix $ \bincounts^{(ij)} $.
In this case, off-diagonal values may be non-zero.

\begin{figure}
    \centering
    \footnotesize

\newcommand{\dsfigHistHeight}{8.5em}
\newcommand{\dsfigBlowUpSize}{0.7*\dsfigHistHeight}

\newcommand{\dsfigMinMaxColour}{black}

\pgfplotsset{
    2d histogram plot/.style={
        view={0}{90},
        width=\dsfigHistHeight, height=\dsfigHistHeight,
        ticks=none,
        xticklabels={},
        yticklabels={},
        anchor=south west,
        scale only axis=true,
    },
    /pgfplots/colormap={colour1map}{
        color=(white),
        color=(colour1!20),
        color=(colour1!40),
        color=(colour1!60)
    }
}

\begin{tikzpicture}
    \tikzstyle{every node} = [inner sep=0.6em, text depth=0.3em]

    \begin{axis}[2d histogram plot, name=hist2d]
        \addplot3[surf, shader=flat corner, mesh/cols=10, mesh/ordering=rowwise] file {\dataForPairwiseHistFigureFour};
    \end{axis}
    \node[anchor=south] (labelhist) at (hist2d.north) {\small \textbf{2-d histogram: $ \bincounts^{(ij)} $}};

    \node[anchor=north] (nBins2d1) at (hist2d.south) {$ \nbins^{(j|i)} $};
    \draw[->] (nBins2d1.east) -- (nBins2d1.center -| hist2d.east);
    \draw[->] (nBins2d1.west) -- (nBins2d1.center -| hist2d.west);
    \node[anchor=south, rotate=90] (nBins2d2) at (hist2d.west) {$ \nbins^{(i|j)} $};
    \draw[->] (nBins2d2.east) -- (nBins2d2.east |- hist2d.north);
    \draw[->] (nBins2d2.west) -- (nBins2d2.west |- hist2d.south);

    \node[minimum height=\dsfigBlowUpSize, minimum width=\dsfigBlowUpSize, xshift=5.5em, yshift=-0.8em, postaction={draw, thick, black}, line width=0cm, fill=colour1!20, anchor=west] (blowup) at (hist2d.east) {};
    \node[anchor=south west, inner sep=0, minimum width=0.1111*\dsfigHistHeight, minimum height=0.1111*\dsfigHistHeight, postaction={draw, thick, black}, line width=0cm, xshift=0.6666*\dsfigHistHeight, yshift=0.6666*\dsfigHistHeight] (spotlight) at (hist2d.south west) {};
    \draw[black, thin, densely dashed] (spotlight.north east) -- (blowup.north west);
    \draw[black, thin, densely dashed] (spotlight.south east) -- (blowup.south west);
    \node[anchor=north, align=center] () at (blowup.north |- labelhist.north) {\small \textbf{2-d Bin: $ \bincount^{(ij)}_{\binidx_{i}\binidx_{j}} $}
    };

    \pgfmathsetseed{42}
    \foreach \i in {0,...,45} {
        \pgfmathsetmacro{\x}{(rand + 1) / 2 * 0.81 + 0.08}
        \pgfmathsetmacro{\y}{(rand + 1) / 2 * 0.75 + 0.15}
        \node[circle, fill=black!50, inner sep=0, minimum size=0.35em, xshift=\x*\dsfigBlowUpSize, yshift=\y*\dsfigBlowUpSize] () at (blowup.south west) {};
    }

    \node[anchor=west] () at (blowup.north east) {$ \edge^{(i|j)}_{\binidx_{i}+1} $};
    \node[anchor=west] () at (blowup.south east) {$ \edge^{(i|j)}_{\binidx_{i}} $};
    \node[anchor=north] () at (blowup.south west) {$ \edge^{(j|i)}_{\binidx_{j}} $};
    \node[anchor=north] () at (blowup.south east) {$ \edge^{(j|i)}_{\binidx_{j}+1} $};

    \draw[densely dotted] ([yshift=-0.1*\dsfigBlowUpSize]blowup.north west) -- ([yshift=-0.1*\dsfigBlowUpSize]blowup.north east);
    \draw[densely dotted] ([yshift=0.16*\dsfigBlowUpSize]blowup.south west) -- ([yshift=0.16*\dsfigBlowUpSize]blowup.south east);
    \draw[densely dotted] ([xshift=-0.16*\dsfigBlowUpSize]blowup.north east) -- ([xshift=-0.16*\dsfigBlowUpSize]blowup.south east);
    \draw[densely dotted] ([xshift=0.11*\dsfigBlowUpSize]blowup.north west) -- ([xshift=0.11*\dsfigBlowUpSize]blowup.south west);

    \draw[latex-, densely dotted, \dsfigMinMaxColour] ([yshift=-0.1*\dsfigBlowUpSize]blowup.north east) -- ([yshift=-0.26*\dsfigBlowUpSize, xshift=2.4em]blowup.north east);
    \node[anchor=west, yshift=-0.26*\dsfigBlowUpSize, xshift=2.5em, inner sep=0.2em, \dsfigMinMaxColour] () at (blowup.north east) {$ \binmax{\binidx_{i}}{(i|j)} $};

    \draw[latex-, densely dotted, \dsfigMinMaxColour] ([yshift=0.16*\dsfigBlowUpSize]blowup.south east) -- ([yshift=0.27*\dsfigBlowUpSize, xshift=2.4em]blowup.south east);
    \node[anchor=west, yshift=0.29*\dsfigBlowUpSize, xshift=2.5em, inner sep=0.2em, \dsfigMinMaxColour] () at (blowup.south east) {$ \binmin{\binidx_{i}}{(i|j)} $};

    \node[anchor=south, xshift=-0.14*\dsfigBlowUpSize, inner sep=0.2em, \dsfigMinMaxColour, yshift=0.3em] () at (blowup.north east) {$ \binmax{\binidx_{j}}{(j|i)} $};

    \node[anchor=south, xshift=0.14*\dsfigBlowUpSize, inner sep=0.2em, \dsfigMinMaxColour, yshift=0.3em] () at (blowup.north west) {$ \binmin{\binidx_{j}}{(j|i)} $};

    \draw[] ([xshift=-0.16*\dsfigBlowUpSize, yshift=0.3em]blowup.north east) -- ([xshift=-0.16*\dsfigBlowUpSize, yshift=-0.3em]blowup.north east);

    \draw[] ([xshift=0.11*\dsfigBlowUpSize, yshift=0.3em]blowup.north west) -- ([xshift=0.11*\dsfigBlowUpSize, yshift=-0.3em]blowup.north west);

    \node[anchor=north, xshift=0.475*\dsfigBlowUpSize] () at (blowup.south west) {$ \binmidpoint^{(j|i)}_{ \binidx_{j} } $};
    \node[anchor=west, yshift=0.53*\dsfigBlowUpSize] () at (blowup.south east) {$ \binmidpoint^{(i|j)}_{ \binidx_{i} } $};

    \draw[] ([xshift=0.475*\dsfigBlowUpSize, yshift=0.3em]blowup.south west) -- ([xshift=0.475*\dsfigBlowUpSize, yshift=-0.3em]blowup.south west);

    \draw[] ([xshift=-0.3em, yshift=0.53*\dsfigBlowUpSize]blowup.south east) -- ([xshift=0.3em, yshift=0.53*\dsfigBlowUpSize]blowup.south east);

\end{tikzpicture}
    \caption{
        Notation for two-dimensional histograms with bin counts matrix $ \bincounts^{(ij)} $ and individual bin count $ \bincount^{(ij)}_{\binidx_{i}\binidx_{j}} $.
    }
    \label{fig:histogram}
\end{figure}

\histname{} stores various metadata for each bin, namely:
1) \textit{minimum} and \textit{maximum} actual data values, $ \binmins{\binidx}{(i)} $ and $ \binmaxs{\binidx}{(i)} $;
2) the bin \textit{midpoint}, $ \binmidpoint^{(i)}_{\binidx} $, which is equidistant between the minimum and maximum values;
3) bounds on the weighted centre of the data points within the bin, $ \binmidpoint^{(i)-}_{\binidx} $ and $ \binmidpoint^{(i)+}_{\binidx} $;
and 4) the \textit{unique count}, $ \binuniquecount^{(i)}_ {\binidx} $, which is the number of unique values in the bin.
Note that for two-dimensional histograms, each of these metadata apply to both dimensions, as illustrated in \autoref{fig:histogram}.

In this paper, vectors are indicated by bold lower case letters and matrices by bold upper case letters.
The L1 norm is denoted by $ \lonenorm{\cdot} $, the vector dot product by $ \boldsymbol{x}\cdot\boldsymbol{y} $, Hadamard (element-wise) multiplication by $ \boldsymbol{x}\hadamardproduct\boldsymbol{y} $ and Hadamard division by $ \boldsymbol{x}\hadamarddivision\boldsymbol{y} $.
We also use specific index variables for specific purposes.
That is, $ i $ and $ j $ are used for column indices within a dataset, $ \binidx $ is used for histogram bins and $ \subbinidx $ is used for sub-bins.
The terms \textit{column} and \textit{dimension} are used interchangeably, as well as \textit{tuple} and \textit{data point}.

The following subsections describe \histname{} construction (\autoref{sec:histogram:construction}), bin weighted centre bounds (\autoref{sec:histogram:bin_centres}) and \histname{} storage (\autoref{sec:histogram:storage}).

\subsection{Histogram construction}
\label{sec:histogram:construction}

\begin{figure*}[!t]
    \centering
    \footnotesize
    \begin{tabular}{p{\algowidth}}
	\toprule
	\showAlgoCounter{\label{algo:build}}{\algoBuild} \\ \midrule
	\textbf{Inputs:} dataset~$ \dataset $, sample size~$ \nsamples $, minimum points~$ \minpts $, hypothesis test significance~$ \testsignificance $, initial bin edges~$ \edgesfromgd $ (optional) \\
	\textbf{Outputs:} \histname{} data structure \\
	\vspace{-0.65em}
	\begin{algorithmic}[1]\setstretch{\algolinespacing}
        \State $ \datasetsample \leftarrow $ downsample $ \dataset $ to $ \nsamples $ rows

        \For{$ i = 1, 2, \ldots, $ number of columns in $ \dataset $}
            \State \LineComment{1-d histograms}
            \State $ \edgesinit^{(i)} \leftarrow $ downsample $ \edgesfromgd^{(i)} $ to $ \ceil{\nsamples / \minpts} $ values, else min/max of $ \dataset^{(i)} $
            \State $ \edges^{(i)}, \binmins{}{(i)}, \binmaxs{}{(i)}, \binuniquecounts^{(i)}
            \leftarrow \langle \edgeinit^{(i)}_{0} \rangle, \emptyset, \emptyset, \emptyset $  \algocomment{initialise}

            \For{$ \binidx = 1, 2, \ldots, $ size of $ \edgesinit^{(i)} - 1 $}  \algocomment{loop over initial bins}
                \State $ \datacolumn_{\binidx} \leftarrow $ elements of $ \datasetsample^{(i)} $ between of $ \edgeinit^{(i)}_{\binidx} $ and $ \edgeinit^{(i)}_{\binidx + 1} $

                \State $ \edges^{(i)}_{\text{new}}, \binmins{\text{new}}{(i)}, \binmaxs{\text{new}}{(i)}, \binuniquecounts^{(i)}_{\text{new}}
                \leftarrow \algoRefineOneDim \big( \edgeinit^{(i)}_{\binidx}, \edgeinit^{(i)}_{\binidx + 1}, \datacolumn_{\binidx}, \minpts, \testsignificance \big) $

                \State Append $ \edges^{(i)}_{\text{new}}, \binmins{\text{new}}{(i)}, \binmaxs{\text{new}}{(i)}, \binuniquecounts^{(i)}_{\text{new}} $ to $ \edges^{(i)}, \binmins{}{(i)}, \binmaxs{}{(i)}, \binuniquecounts^{(i)} $
            \EndFor

            \State $ \binmidpoints^{(i)} \leftarrow \big(\binmaxs{}{(i)} + \binmins{}{(i)}\big) / 2 $
            \State $ \bincentreboundslow{}{(i)}, \bincentreboundshigh{}{(i)} \leftarrow $ bin centre bounds~(\autoref{eq:bin_centre_bounds})
            \State $ \bincounts^{(i)} \leftarrow \histfunc \big( \datasetsample^{(i)}, \edges^{(i)} \big) $

            \State \LineComment{2-d histograms}
            \For{$ j = 1, 2, \ldots, i - 1 $}  \algocomment{all columns before $ i $}
                \State $ \edges^{(i|j)}, \edges^{(j|i)} \leftarrow \edges^{(i)}, \edges^{(j)} $  \algocomment{initial 2-d bin edges}
                \State $ \bincounts^{(ij)} \leftarrow \histfunc( \datasetsample^{(ij)}, \edges^{(i|j)}, \edges^{(j|i)} ) $  \algocomment{initial 2-d bin counts}
                \For{each bin $ (\binidx_{i}, \binidx_{j}) $ in $ \bincounts^{(ij)} $ where $ \bincount_{\binidx_{i}\binidx_{j}} > \minpts $}
                    \State $ \datacolumns_{\binidx_{i},\binidx_{j}} \leftarrow $ elements of $ \datasetsample^{(ij)} $ within bin $ (\binidx_{i}, \binidx_{j}) $

                    \State $ \edges^{(i|j)}_{\text{new}}, \edges^{(j|i)}_{\text{new}}
                    \leftarrow \algoRefineTwoDim($\parbox[t]{0.3\columnwidth}{
                        $ \edge^{(i|j)}_{t_i}, \edge^{(i|j)}_{t_i + 1}, \edge^{(j|i)}_{t_j}, \edge^{(j|i)}_{t_j + 1}, \datacolumns_{\binidx_{i},\binidx_{j}}, \minpts, \testsignificance ) $
                    }

                    \State Insert $ \edges^{(i|j)}_{\text{new}} $ into $ \edges^{(i|j)} $ after index $ \binidx_{i} $
                    \State Insert $ \edges^{(j|i)}_{\text{new}} $ into $ \edges^{(j|i)} $ after index $ \binidx_{j} $
                \EndFor
                \State $ \bincounts^{(ij)} \leftarrow \histfunc( \datasetsample^{(ij)}, \edges^{(i|j)}, \edges^{(j|i)} ) $ \algocomment{refined 2-d bin counts}
                \State $ \binmins{}{(i|j)} $, $ \binmins{}{(j|i)} $, $ \binmaxs{}{(i|j)} $, $ \binmaxs{}{(j|i)} \leftarrow $ min/max values in each bin
                \State $ \binmidpoints^{(i|j)}, \binmidpoints^{(j|i)} \leftarrow (\binmaxs{}{(i|j)} + \binmins{}{(i|j)}) / 2, (\binmaxs{}{(j|i)} + \binmins{}{(j|i)}) / 2 $
                \State $ \bincentreboundslow{}{(i|j)}, \bincentreboundshigh{}{(i|j)}, \bincentreboundslow{}{(j|i)}, \bincentreboundshigh{}{(j|i)} \leftarrow $ bin centre bounds~(\autoref{eq:bin_centre_bounds})
                \State $ \binuniquecounts^{(i|j)} $, $ \binuniquecounts^{(j|i)} \leftarrow $ number of unique values in each bin
            \EndFor
        \EndFor
	\end{algorithmic}
	\\[-1em] \bottomrule
\end{tabular}

\end{figure*}

\histname{} construction is outlined in \autoref{algo:build}, which consists of three sections:
1) extract a sample $ \datasetsample $ of size $ \nsamples $ from dataset $ \dataset $ (line~1),
2) iterate over each column to generate one-dimensional histograms (lines~3--11), and
3) iterate over each pair of columns to generate two-dimensional histograms (lines~13--26).

One-dimensional histograms are generated by first selecting initial bin edges, $ \edgesinit $, using either the bases from GreedyGD (downsampled to at most $ \ceil*{\nsamples / M} $) or just the min and max values of the relevant column (line~4).
Each initial bin is then refined using \algoRefineOneDim{} (lines~6--9), which determines whether a bin should be split or not based on a hypothesis test.
The one-dimensional histograms are finalised by computing the midpoints, weighted centre bounds and bin counts (using a standard histogram function, denoted $ \histfunc $) in lines~10--12.


\begin{figure*}[!t]
    \centering
    \footnotesize
    \begin{tabular}{p{\algowidth}}
    \toprule
    \showAlgoCounter{\label{algo:refine1d}}{\algoRefineOneDim} \\ \midrule
    \textbf{Inputs:}
        bin lower edge~$ \edge_{\low} $,
        bin upper edge~$ \edge_{\high} $,
        vector of data values~$ \datacolumn $ within the bin,
        minimum points~$ \minpts $,
        significance~$ \testsignificance $ \\
    \textbf{Outputs:}
        upper bin edges~$ \edgesrefined{}{} $,
        bin minimum values~$ \binmins{}{} $,
        bin maximum values~$ \binmaxs{}{} $,
        bin unique counts~$ \binuniquecounts $ \\
    \vspace{-0.65em}
    \begin{algorithmic}[1]\setstretch{\algolinespacing}
        \State $ \datasetuniques \leftarrow $ unique values in $ \datacolumn $
        \State $ \nunique \leftarrow $ number of elements in $ \datasetunique $
        \If{$ \datacolumn $ is empty}  
            \State \Return
            $ \langle \edge_{\high} \rangle $,
            $ \langle \edge_{\low} \rangle $,
            $ \langle \edge_{\high} \rangle $,
            $ \langle 0 \rangle $

        \ElsIf{$ \nunique = 1 $}  
            \State \Return
            $ \langle \edge_{\high} \rangle $,
            $ \langle \datasetunique_{0} \rangle $,
            $ \langle \datasetunique_{0} \rangle $,
            $ \langle 1 \rangle $

        \ElsIf{fewer than $ \minpts $ tuples in $ \datacolumn $ or $ \algoIsUniform\!\left( \datacolumn, \edge_{\low}, \edge_{\high}, \nunique, \testsignificance \right) $}  
            \State \Return
            $ \langle \edge_{\high} \rangle $,
            $ \langle \min(\datasetuniques) \rangle $,
            $ \langle \max(\datasetuniques) \rangle $,
            $ \langle \nunique \rangle $

        \Else  
            \State $ \splitpoint \leftarrow $ select split point

            \State $ \datacolumn_{\low},\, \datacolumn_{\high} \leftarrow $ split $ \datacolumn $ at $ \splitpoint $

            \State $ \edgesrefined{\low}{},\, \binmins{\low}{},\, \binmaxs{\low}{},\, \binuniquecounts_{\low}^{\,} \leftarrow \algoRefineOneDim( \datacolumn_{\low},\, \edge_{\low},\, \splitpoint, \minpts,\, \testsignificance ) $

            \State $ \edgesrefined{\high}{},\, \binmins{\high}{},\, \binmaxs{\high}{},\, \binuniquecounts_{\high}^{\,} \leftarrow \algoRefineOneDim( \datacolumn_{\high},\, \splitpoint,\, \edge_{\high}, \minpts,\, \testsignificance ) $

            \State \Return
            $ \langle \edgesrefined{\low}{}, \edgesrefined{\high}{} \rangle $,
            $ \langle \binmins{\low}{}, \binmins{\high}{} \rangle $,
            $ \langle \binmaxs{\low}{}, \binmaxs{\high}{} \rangle $,
            $ \langle \boldsymbol{u}_{\low}, \boldsymbol{u}_{\high} \rangle $
        \EndIf
    \end{algorithmic}
    \\[-1em] \bottomrule
\end{tabular}

\end{figure*}

\algoRefineOneDim{} is described in detail in \autoref{algo:refine1d}.
This is a recursive algorithm that checks whether a given bin needs to be split, i.e., if the distribution of data points within the bin is not sufficiently uniform.
If so, it splits the bin and calls itself on the two newly created splits, denoted by $ \low $ and $ \high $.
We tested both equal-width (split at bin midpoint) and equal-depth (split at median) approaches and found equal-width to perform slightly better.
Bins will not be split if they are empty (line~3), have only one unique value (line~5) or have too few data points (line~7).
\algoRefineOneDim{} returns the upper edges from the original bin and all new splits, as well as the bin minimum(s), maximum(s) and unique count(s).

\begin{figure}[!t]
    \centering
    \small

\newcommand{\figbrwidth}{1/9*\columnwidth}
\newcommand{\figbrheight}{1/9*\columnwidth}
\newcommand{\figbrspacing}{3/9/5*\columnwidth}
\newcommand{\figbrpointsize}{0.15em}
\newcommand{\figbrlightwidth}{0.3em}

\begin{tikzpicture}[
        grid node/.style={
            minimum height=\figbrwidth,
            minimum width=\figbrheight,
            anchor=north west,
            draw,
            outer sep=0pt,
        },
        data points/.style={
            circle,
            fill=black!40,
            inner sep=0,
            minimum size=\figbrpointsize,
            text height=0,
        },
        data grid/.pic={
            \node[grid node] (g1) at (0,0) {};
            \node[grid node] (g2) at (g1.south west) {};
            \node[grid node] (g3) at (g2.south west) {};
            \foreach \i in {0,...,100} {
                \pgfmathsetmacro{\x}{(rand + 1) / 2 * 0.96 + 0.02}
                \pgfmathsetmacro{\y}{(rand + 1) / 2 * 0.96 + 0.02}
                \node[data points, xshift=\x*\figbrwidth, yshift=\y*\figbrheight] () at (g1.south west) {};
            };
            \pgfmathsetseed{400}
            \foreach \i in {0,...,100} {
                \pgfmathsetmacro{\x}{int(((rand + 1) / 2) ^ (0.4) * 96 + 2)}
                \pgfmathsetmacro{\y}{int((rand + 1) / 2 * 96 + 2)}
                \node[data points, xshift=\x/100*\figbrwidth, yshift=\y/100*\figbrheight] () at (g2.south west) {};
            };
            \foreach \i in {0,...,50} {
                \pgfmathsetmacro{\x}{int((rand + 1) / 2 * 96 + 2)}
                \pgfmathsetmacro{\y}{int(((rand + 1) / 2) ^ (0.4) * 96 + 2)}
                \node[data points, xshift=\x/100*\figbrwidth, yshift=\y/100*\figbrheight] () at (g2.south west) {};
            };
            \foreach \i in {0,...,15} {
                \pgfmathsetmacro{\x}{(rand + 1) / 2 * 0.96 + 0.02}
                \pgfmathsetmacro{\y}{(rand + 1) / 2 * 0.96 + 0.02}
                \node[data points, xshift=\x*\figbrwidth, yshift=\y*\figbrheight] () at (g3.south west) {};
            };
        },
        bin highlights/.style={
            minimum height=\figbrheight-\figbrlightwidth,
            minimum width=\figbrwidth-\figbrlightwidth,
            draw,
            anchor=north west,
            line width=\figbrlightwidth,
            inner sep=0pt,
            opacity=0.4,
        },
        subfigure labels/.style={
            text depth=0.3em,
            text height=1.0em,
            anchor=south,
        },
        new split/.style={colour2, ultra thick, densely dashed},
        old split/.style={black},
    ]

    \path (0,0) pic {data grid};
    \path (1*\figbrwidth + 1*\figbrspacing,0) pic {data grid};
    \path (2*\figbrwidth + 2*\figbrspacing,0) pic {data grid};
    \path (3*\figbrwidth + 3*\figbrspacing,0) pic {data grid};
    \path (4*\figbrwidth + 4*\figbrspacing,0) pic {data grid};
    \path (5*\figbrwidth + 5*\figbrspacing,0) pic {data grid};

    \node[subfigure labels] () at (0.5*\figbrwidth, 0.2em) {(a)};
    \node[subfigure labels] () at (1.5*\figbrwidth+1*\figbrspacing, 0.2em) {(b)};
    \node[subfigure labels] () at (2.5*\figbrwidth+2*\figbrspacing, 0.2em) {(c)};
    \node[subfigure labels] () at (3.5*\figbrwidth+3*\figbrspacing, 0.2em) {(d)};
    \node[subfigure labels] () at (4.5*\figbrwidth+4*\figbrspacing, 0.2em) {(e)};
    \node[subfigure labels] () at (5.5*\figbrwidth+5*\figbrspacing, 0.2em) {(f)};

    \draw[->, ultra thick] ([xshift=3pt]1*\figbrwidth, -0.5*\figbrheight) -- +([xshift=-6pt]1*\figbrspacing,0);
    \draw[->, ultra thick] ([xshift=3pt]2*\figbrwidth+1*\figbrspacing, -1.5*\figbrheight) -- +([xshift=-6pt]1*\figbrspacing,0);
    \draw[->, ultra thick] ([xshift=3pt]3*\figbrwidth+2*\figbrspacing, -1.5*\figbrheight) -- +([xshift=-6pt]1*\figbrspacing,0);
    \draw[->, ultra thick] ([xshift=3pt]4*\figbrwidth+3*\figbrspacing, -1.5*\figbrheight) -- +([xshift=-6pt]1*\figbrspacing,0);
    \draw[->, ultra thick] ([xshift=3pt]5*\figbrwidth+4*\figbrspacing, -2.5*\figbrheight) -- +([xshift=-6pt]1*\figbrspacing,0);

    \node[bin highlights, colour3] () at (1*\figbrwidth+1*\figbrspacing,0) {};

    \node[bin highlights, colour2] () at (2*\figbrwidth+2*\figbrspacing,-1*\figbrheight) {};

    \node[bin highlights, colour2, minimum width=0.5*\figbrwidth-\figbrlightwidth] () at (3*\figbrwidth+3*\figbrspacing,-1*\figbrheight) {};
    \node[bin highlights, colour3, minimum width=0.5*\figbrwidth-\figbrlightwidth] () at (3.5*\figbrwidth+3*\figbrspacing,-1*\figbrheight) {};

    \node[bin highlights, colour3, minimum width=0.5*\figbrwidth-\figbrlightwidth, minimum height=0.5*\figbrheight-\figbrlightwidth] () at (4*\figbrwidth+4*\figbrspacing,-1*\figbrheight) {};
    \node[bin highlights, gray, minimum width=0.5*\figbrwidth-\figbrlightwidth, minimum height=0.5*\figbrheight-\figbrlightwidth] () at (4*\figbrwidth+4*\figbrspacing,-1.5*\figbrheight) {};

    \node[bin highlights, gray, minimum width=0.5*\figbrwidth-\figbrlightwidth] () at (5*\figbrwidth+5*\figbrspacing,-2*\figbrheight) {};
    \node[bin highlights, gray, minimum width=0.5*\figbrwidth-\figbrlightwidth] () at (5.5*\figbrwidth+5*\figbrspacing,-2*\figbrheight) {};

    \draw[new split] (2.5*\figbrwidth+2*\figbrspacing, 0) -- +(0, -3*\figbrheight);
    \draw[old split] (3.5*\figbrwidth+3*\figbrspacing, 0) -- +(0, -3*\figbrheight);
    \draw[old split] (4.5*\figbrwidth+4*\figbrspacing, 0) -- +(0, -3*\figbrheight);
    \draw[old split] (5.5*\figbrwidth+5*\figbrspacing, 0) -- +(0, -3*\figbrheight);

    \draw[new split] (3*\figbrwidth+3*\figbrspacing, -1.5*\figbrheight) -- +(\figbrwidth,0);
    \draw[old split] (4*\figbrwidth+4*\figbrspacing, -1.5*\figbrheight) -- +(\figbrwidth,0);
    \draw[old split] (5*\figbrwidth+5*\figbrspacing, -1.5*\figbrheight) -- +(\figbrwidth,0);

\end{tikzpicture}
    \vspace{-0.5em}
    \caption{
        Illustration of two-dimensional bin refinement:
        (a)~original data;
        (b)~first bin is uniformly distributed (green);
        (c)~second bin is non-uniformly distributed (red) in both dimensions, but less uniform in the vertical dimension, so a new split is added to all bins in the same column;
        (d)~one of the resulting sub-bins is non-uniform in the vertical dimension, so a new split is added;
        (e)~the resulting sub-bins are uniform (green) or contain fewer than $ \minpts $ points (gray), no further splits;
        (f)~both sub-bins from the third original bin contain fewer than $ \minpts $ points, no further splits.
    }
    \label{fig:2d_bin_refinement}
\end{figure}
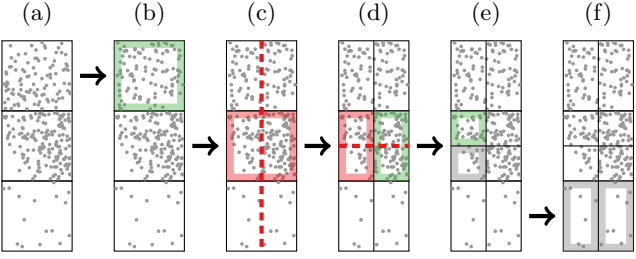

Two-dimensional histograms are constructed using an inner loop in \algoBuild{} (line~14) that iterates over all columns previously iterated over by the outer loop (line~2), thereby covering all pairs of columns.
For each column pair, an initial histogram is created using bin edges from the corresponding one-dimensional histograms (lines~15--16).
This histogram is then refined by applying \algoRefineTwoDim{} to each bin with at least $ \minpts $ tuples (lines~17--21).
\algoRefineTwoDim{} is the two-dimensional analogue of \algoRefineOneDim{} and performs a hypothesis test for uniformity on each column separately.
In the case where both columns are non-uniform, the split is applied to the least uniform column.
Note that any bin splits created by \algoRefineTwoDim{} apply only to the current column pair.
That is, if a split is applied to bin $ \binidx_{i} $ in the $ (ij) $th histogram, it does not affect any other histograms involving column $ i $.
However, the split \textit{does} apply to all bins with the same bin index $ \binidx_{i} $ in the $ (ij) $th histogram.
\autoref{fig:2d_bin_refinement} provides an illustration of the two dimensions bin refinement process.
The two-dimensional histograms are completed by calculating the bin counts, minimums, maximums, midpoints, weighted centre bounds and unique counts (lines~22--26).

Hypothesis testing in both \algoRefineOneDim{} and \algoRefineTwoDim{} is performed using the function \algoIsUniform{}.
Specifically, a chi-squared test is performed against the null hypothesis that data points in the given bin are uniformly distributed between the bin edges.
That is, the bin is divided into a number of sub-bins and the number of points in each sub-bin is compared to the expected number under the null hypothesis.
The appropriate number of sub-bins, $ \nsubbins $, is determined using the Terrell-Scott inequality~\citep{Scott_2009} as follows:
\begin{align}
    \nsubbins = \ceil*{\left( 2 \binuniquecount \right)^{1/3}}.
\end{align}
The test statistic is then:
\begin{align}
    \chisq{} = \sum_{\subbinidx=0}^{\nsubbins-1} \frac{\big(\subbincount_{\subbinidx} - \subbincountexpected \big)^2}{\subbincountexpected},
\end{align}
where $ \subbincountexpected = \bincount / \nsubbins $ is the expected sub-bin count under the null hypothesis and $ \subbincount_{\subbinidx} $ is the actual count for the $ \subbinidx $th sub-bin.
The critical value, $ \chisq{\testsignificance} $, is then defined such that $ \prob\left(\chisq{} > \chisq{\testsignificance} \right) = \testsignificance $ and the null hypothesis is rejected (and the bin split) if $ \chisq{} > \chisq{\testsignificance} $.

Notably, \histname{} construction is highly parallelisable, since each histogram and bin refinement can be computed independently, provided one-dimensional histograms are constructed first.


\begin{figure*}[!t]
    \centering
    \tiny
    \newcommand{\figstoragebytewidth}{1.3em}

\tikzset{
    every node/.style={
        text depth=0.3em,
        text height=1.0em,
        align=left,
        inner sep=0.5em,
    },
    storage node/.style={
        minimum width=\figstoragebytewidth,
        align=center,
        fill=colour1!15,
        anchor=west,
        line width=0cm,
        postaction={draw, black},
        inner sep=0.35em
    },
    params node/.style={
        storage node,
        fill=black!15,
    },
    dot node/.style={
        text depth=0,
        text height=0,
        inner sep=0.13em,
        circle,
    },
    connector lines/.style={
        densely dashed,
        black!60,
    }
}

\begin{tikzpicture}
    \matrix[] (main) {
        \node[params node, minimum height=2em] (centralparams) {\textbf{Params}};
        \node[storage node, minimum height=2em, minimum width=10*\figstoragebytewidth] (central1d) at (centralparams.east) {\textbf{1-d hists}};
        \node[storage node, minimum height=2em, minimum width=5*\figstoragebytewidth] (central2d) at (central1d.east) {\textbf{2-d hists}};
        \node[storage node, minimum height=2em, minimum width=22*\figstoragebytewidth, fill=colour2!15] (centralcounts) at (central2d.east) {\textbf{Bin counts}};
        \node[anchor=east, xshift=-1em, inner sep=0] () at (centralparams.west) {\textbf{Overall Storage Configuration}};
        \\
    };

    \begin{scope}[local bounding box=group 1]
        \matrix[inner sep=0, anchor=south, yshift=1.8*\figstoragebytewidth, xshift=-4.5*\figstoragebytewidth] (params) at (centralparams.north) {
            \node[params node, minimum width=3*\figstoragebytewidth] (nrows) {$ \nrows $};
            \node[params node, minimum width=3*\figstoragebytewidth] (nsamples) at (nrows.east) {$ \nsamples $};
            \node[params node, minimum width=3*\figstoragebytewidth] (minpts) at (nsamples.east) {$ \minpts $};
            \node[params node, minimum width=1.5*\figstoragebytewidth] (significance) at (minpts.east) {$ \testsignificance $};
            \node[params node, minimum width=0.75*\figstoragebytewidth] (dims) at (significance.east) {$ \ndims $};
            \node[params node, minimum width=5*\figstoragebytewidth] (bytedepths) at (dims.east) {$ \bytesrequired^{(1)}, \ldots, \bytesrequired^{(d)} $};
            \\
        };
        \node[anchor=south] () at (params.north) {Parameters: \hspace{0.2em} $ \storage_{\text{params}} = 29 + \ndims $ bytes};
    \end{scope}

    \draw[connector lines] (centralparams.north west) -- (nrows.south west) node [pos=0, dot node] {} node [pos=1, dot node] {};
    \draw[connector lines] (centralparams.north east) -- (bytedepths.south east) node [pos=0, dot node] {} node [pos=1, dot node] {};

    \begin{scope}[local bounding box=group 2]
        \matrix[inner sep=0, anchor=south, yshift=1.8*\figstoragebytewidth, xshift=8*\figstoragebytewidth] (hists2d) at (central2d.north) {
            \node[params node, minimum width=4.5*\figstoragebytewidth] (hist2dnew) {$ \nbins^{(i|j)} - \nbins^{(i)} $};
            \node[storage node, minimum width=6.2*\figstoragebytewidth] (hist2dedges) at (hist2dnew.east) {$ \ldots, \edge^{(i|j)}_{\binidx_i}, \ldots $};
            \node[storage node, minimum width=6.2*\figstoragebytewidth] (hist2dmins) at (hist2dedges.east) {$ \ldots, \binmin{\binidx_i}{(i|j)}, \ldots $};
            \node[storage node, minimum width=6.2*\figstoragebytewidth] (hist2dmaxs) at (hist2dmins.east) {$ \ldots, \binmax{\binidx_i}{(i|j)}, \ldots $};
            \node[storage node, minimum width=6.2*\figstoragebytewidth] (hist2dunique) at (hist2dmaxs.east) {$ \ldots, \binuniquecount^{(i|j)}_{\binidx_i}, \ldots $};
            \\
        };
        \node[anchor=south] () at (hists2d.north) {2-d Histograms: \hspace{0.2em} $ \storage_{\text{2-d hists}} = 2\ndims(\ndims-1) + \sum_{i=1}^{\ndims} (3\bytesrequired^{(i)} + 4) \big( \sum_{j=1}^{\ndims} \nbins^{(i|j)} - \ndims \nbins^{(i)} \big) $ bytes};  
    \end{scope}

    \draw[connector lines] (central2d.north west)+(2.0*\figstoragebytewidth,0) -- (hists2d.south west) node [pos=0, dot node] {} node [pos=1, dot node] {};
    \draw[connector lines] (central2d.north west)+(2.8*\figstoragebytewidth,0) -- (hists2d.south east) node [pos=0, dot node] {} node [pos=1, dot node] {};

    \begin{scope}[local bounding box=group 3]
        \matrix[inner sep=0, anchor=north west, yshift=-1.8*\figstoragebytewidth] (hists1d) at (central1d.south -| group 1.west) {
            \node[params node, minimum width=1.5*\figstoragebytewidth] (hist2dnew) {$ \nbins^{(i)} $};
            \node[storage node, minimum width=5.3*\figstoragebytewidth] (hist2dedges) at (hist2dnew.east) {$ \ldots, \edge^{(i)}_{\binidx}, \ldots $};
            \node[storage node, minimum width=5.3*\figstoragebytewidth] (hist2dmins) at (hist2dedges.east) {$ \ldots, \binmin{\binidx}{(i)}, \ldots $};
            \node[storage node, minimum width=5.3*\figstoragebytewidth] (hist2dmaxs) at (hist2dmins.east) {$ \ldots, \binmax{\binidx}{(i)}, \ldots $};
            \node[storage node, minimum width=5.3*\figstoragebytewidth] (hist2dunique) at (hist2dmaxs.east) {$ \ldots, \binuniquecount^{(i)}_{\binidx}, \ldots $};
            \\
        };
        \node[anchor=north] () at (hists1d.south) {1-d Histograms: \hspace{0.2em} $ \storage_{\text{1-d hists}} = 2\ndims + \sum_{i=1}^{\ndims} \nbins^{(i)} (3\bytesrequired^{(i)} + 4) $ bytes};  
    \end{scope}
    \draw[connector lines] (central1d.south west)+(2.0*\figstoragebytewidth,0) -- (hists1d.north west) node [pos=0, dot node] {} node [pos=1, dot node] {};
    \draw[connector lines] (central1d.south west)+(2.8*\figstoragebytewidth,0) -- (hists1d.north east) node [pos=0, dot node] {} node [pos=1, dot node] {};

    \begin{scope}[local bounding box=group 4]
        \matrix[inner sep=0, anchor=north, yshift=-1.8*\figstoragebytewidth, xshift=-3*\figstoragebytewidth] (countsdense) at (centralcounts.south) {
            \node[params node, minimum width=0.75*\figstoragebytewidth] (lencountsdense) {$ \lencount^{(ij)} $};
            \node[params node, minimum width=0.75*\figstoragebytewidth] (countsindicatordense) at (lencountsdense.east) {$ \countstorageindicator^{(ij)} $};
            \node[storage node, minimum width=5.5*\figstoragebytewidth, fill=colour2!15] (countvaluesdense) at (countsindicatordense.east) {$ \ldots, \bincount_{\scalebox{0.65}{$ \binidx_i \binidx_j $}}^{(ij)}, \ldots $};

            \node[anchor=west, xshift=1em, inner sep=0, align=left, yshift=-0.7em] () at (countvaluesdense.east) {
                Dense: \\[-0.6em]
                $ \storage_{\text{counts}} = 2\ndims^2 + \sum\limits_{i=1}^{\ndims}\sum\limits_{j=1}^{\ndims} \Big\lceil{ \frac{\nbins^{\scalebox{0.45}{$ (i|j) $}} \nbins^{\scalebox{0.45}{$ (j|i) $}} \lencount^{\scalebox{0.45}{$ (ij) $}}}{8} \Big\rceil} $};
            \\
        };

        \matrix[inner sep=0, anchor=north west, yshift=-1.4em] (countssparse) at (countsdense.south west) {
            \node[params node, minimum width=0.75*\figstoragebytewidth] (lencountssparse) {$ \lencount^{(ij)} $};
            \node[params node, minimum width=0.75*\figstoragebytewidth] (countsindicatorsparse) at (lencountssparse.east) {$ \countstorageindicator^{(ij)} $};
            \node[params node, minimum width=0.75*\figstoragebytewidth] (countsnonzero) at (countsindicatorsparse.east) {$ \nbinsnonzero^{(ij)} $};
            \node[storage node, minimum width=5.5*\figstoragebytewidth, fill=colour2!15] (countvaluessparse) at (countsnonzero.east) {$ \ldots, \big\{ \golomb(i \nbins^{(i)} + j), \bincount_{\scalebox{0.65}{$ \binidx_i \binidx_j $}}^{(ij)} \big\}, \ldots $};

            \node[anchor=west, xshift=1em, inner sep=0, align=left, yshift=-0.7em] () at (countvaluessparse.east) {
                Sparse: \\[-0.7em]
                $ \storage_{\text{counts}} = 3\ndims^2 + \sum\limits_{i=1}^{\ndims}\sum\limits_{j=1}^{\ndims} \Big\lceil{ \frac{\storage_{\golomb} + \nbinsnonzero \lencount}{8} \Big\rceil} $};
            \\
        };

        \node[anchor=west, inner sep=0em, outer sep=0em, yshift=-0.7em] () at (lencountsdense.south west) {or};
    \end{scope}
    \draw[connector lines] (centralcounts.south west)+(5.0*\figstoragebytewidth,0) -- (countsdense.north west) node [pos=0, dot node] {} node [pos=1, dot node] {};
    \draw[connector lines] (centralcounts.south west)+(5.8*\figstoragebytewidth,0) -- (countvaluesdense.north east) node [pos=0, dot node] {} node [pos=1, dot node] {};

    \scoped[on background layer] {
        \node[anchor=center, fill=black!5, rounded corners, minimum height=4.2em, minimum width=53*\figstoragebytewidth] () at (main) {};
    }

\end{tikzpicture}
    \caption{
        \histname{} storage configuration.
    }
    \label{fig:storage}
\end{figure*}


\subsection{Bin weighted centre bounds}
\label{sec:histogram:bin_centres}

To improve query bounds, \histname{} stores weighted centres bounds for each histogram bin.
The value of the weighted centre bounds depends on whether bins passed the hypothesis test or not.
For bins that did not pass, the only available information on their internal data distribution is that they contain $ \bincount $ data points, $ \binuniquecount $ unique values and extrema $ \binmin{}{} $ and $ \binmax{}{} $.
In this case, weighted centre bounds occur when $ \bincount - \binuniquecount + 1 $ points are at an extrema and one point is at each of the other unique values, which are assumed to be as close to the extrema as possible, i.e., with minimum spacing for the given data type, denoted $ \datatypequanta $.
Conversely, the internal distribution of bins that pass the hypothesis test is known to  be approximately uniform with respect to a given number of sub-bins.
This fact can be used to derive tighter bounds, as shown in \autoref{theorem:centre_bounds}.
Passing and non-passing bins can be distinguished by their bin count, which is at least $ \minpts $ for passing bins and less than $ \minpts $ for non-passing bins.

\begin{theorem}
    \label{theorem:centre_bounds}
    Consider a bin with count $ \bincount $, minimum value $ \binmin{}{} $, maximum value $ \binmax{}{} $ and $ \binuniquecount $ unique values.
    Assume this bin satisfies the hypothesis test in \textup{\algoIsUniform{}} with $ \nsubbins = \big\lceil(2\binuniquecount)^{1/3} \big\rceil $ sub-bins and critical value $ \chisq{\testsignificance} $.
    Let $ \subbinwidth = (\binmax{}{} - \binmin{}{}) / \nsubbins $ be the sub-bin width.
    The bounds for the weighted centre of the points within the bin are then
    \begin{equation}
        c^{\pm} = \binmin{}{} + \frac{(s\pm1) \subbinwidth}{2} \pm \frac{\subbinwidth}{6} \sqrt{\frac{3 \chisq{\testsignificance} (s^2 - 1)}{h}}.
    \end{equation}
\end{theorem}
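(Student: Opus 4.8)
The plan is to parametrise every data configuration consistent with a bin that passed the test, and to bound the weighted centre (the mean of the $\bincount$ points) over this feasible set. First I would write the sub-bin counts as $\subbincount_{\subbinidx} = \subbincountexpected + \subbincountdelta_{\subbinidx}$ with $\subbincountexpected = \bincount/\nsubbins$, so that conservation of the total count forces $\sum_{\subbinidx}\subbincountdelta_{\subbinidx}=0$ and the test statistic collapses to $\chisq{} = (\nsubbins/\bincount)\sum_{\subbinidx}\subbincountdelta_{\subbinidx}^{2}$. The passing condition $\chisq{}\le\chisq{\testsignificance}$ then reduces to the single quadratic constraint $\sum_{\subbinidx}\subbincountdelta_{\subbinidx}^{2}\le \bincount\,\chisq{\testsignificance}/\nsubbins$.

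Next I would isolate the two degrees of freedom the test leaves unconstrained: (i) where the points sit \emph{within} each sub-bin, and (ii) how the counts $\subbincountdelta_{\subbinidx}$ are allocated \emph{across} sub-bins. For (i), for any fixed counts the mean is extremised by placing every point at the right (resp.\ left) edge of its sub-bin; substituting the uniform counts $\subbincount_{\subbinidx}=\subbincountexpected$ and summing $\sum_{\subbinidx=0}^{\nsubbins-1}\subbinidx=\nsubbins(\nsubbins-1)/2$ yields the baseline $\binmin{}{}+(\nsubbins\pm1)\subbinwidth/2$, recovering the first term. For (ii), subtracting the sub-bin mean $\bar{\subbinidx}=(\nsubbins-1)/2$ (permissible since $\sum_{\subbinidx}\subbincountdelta_{\subbinidx}=0$), the count-dependence of the centre is $(\subbinwidth/\bincount)\sum_{\subbinidx}(\subbinidx-\bar{\subbinidx})\subbincountdelta_{\subbinidx}$, and crucially the constant edge shift from (i) leaves this term unchanged.

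The core optimisation is to maximise $\sum_{\subbinidx}(\subbinidx-\bar{\subbinidx})\subbincountdelta_{\subbinidx}$ under the quadratic constraint, which I would do by Cauchy--Schwarz (equivalently, one Lagrange multiplier); the extremiser $\subbincountdelta_{\subbinidx}\propto(\subbinidx-\bar{\subbinidx})$ automatically satisfies $\sum_{\subbinidx}\subbincountdelta_{\subbinidx}=0$. This reduces everything to the discrete variance $\sum_{\subbinidx=0}^{\nsubbins-1}(\subbinidx-\bar{\subbinidx})^{2}=\nsubbins(\nsubbins^{2}-1)/12$, a routine closed-form sum. Feeding this and the active constraint back in produces the second term $\pm(\subbinwidth/6)\sqrt{3\chisq{\testsignificance}(\nsubbins^{2}-1)/\bincount}$ once the constant is simplified via $1/\sqrt{12}=\sqrt{3}/6$.

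I expect the main obstacle to be justifying that the two worst cases may simply be \emph{added}: the reported $c^{\pm}$ maximises (i) and (ii) separately and sums the results, so I would argue that the within-sub-bin placement contributes a constant offset to the centre, independent of the count-allocation term, making the combined extremum exactly additive rather than a loose over-estimate. A secondary point to state explicitly is that the feasible set—arbitrary within-sub-bin placement together with \emph{any} counts meeting the chi-squared threshold—is precisely the set of configurations from which a stored passing bin could have arisen, so the resulting $c^{\pm}$ are genuine guaranteed bounds on the true weighted centre rather than merely typical values.
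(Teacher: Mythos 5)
Your proposal is correct and follows essentially the same route as the paper's proof: the same parametrisation (points placed at sub-bin edges, counts written as $ \bincount/\nsubbins + \subbincountdelta_{\subbinidx} $), the same pair of constraints, and the same constrained optimisation of the resulting linear term --- which the paper solves with explicit Lagrange multipliers and you solve by centring plus Cauchy--Schwarz, an equivalence you yourself note. Your explicit justification that the edge-placement offset and the count-allocation extremum combine additively, and that the chi-squared inequality constraint is active at the optimum, are refinements of detail the paper glosses over, not a different argument.
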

\begin{proof}
    The lower bound occurs when all points are at the lower edge of their respective sub-bins and can be expressed as follows:
    \begin{equation} \label{eq:centre_weighted_min_initial}
        \bincentreboundlow{}{} = \frac{1}{\bincount} \sum_{\subbinidx=0}^{\nsubbins - 1} \subbincount_{\subbinidx} (\binmin{}{} + \subbinidx \subbinwidth).
    \end{equation}
    where $ \subbincount_{\subbinidx} $ is the count for the $ \subbinidx $th sub-bin.
    Sub-bin counts can be expressed in terms of the expected sub-bin count plus an epsilon term, i.e., $ \subbincount_{\subbinidx} = \bincount/\nsubbins + \subbincountdelta_{\subbinidx} $.
    Substituting this into \autoref{eq:centre_weighted_min_initial} gives:
    \begin{equation} \label{eq:centre_weighted_min_with_epsilon}
        \bincentreboundlow{}{} = \binmin{}{} + \frac{\subbinwidth (\nsubbins-1)}{2} + \frac{\subbinwidth}{\bincount} \sum_{\subbinidx=0}^{\nsubbins-1} \subbinidx \subbincountdelta_{\subbinidx}.
    \end{equation}
    This can be optimised using Lagrange Multipliers with constraints:
    \begin{equation} \label{eq:theorem1:constraints}
        \sum_{\subbinidx=0}^{\nsubbins-1} \subbincountdelta_{\subbinidx} = 0
        \text{ and }
        \chisq{\testsignificance} = \sum_{\subbinidx=0}^{\nsubbins-1}\frac{(\subbincount_{\subbinidx} - \bincount/\nsubbins)^2}{\bincount/\nsubbins} = \frac{\nsubbins}{\bincount} \sum_{\subbinidx=0}^{\nsubbins-1} \subbincountdelta_{\subbinidx}^2.
    \end{equation}
    The Lagrangian is then:
    \begin{align}
        \mathcal{L}(\subbincountdelta_{\subbinidx}, \lambda_1, \lambda_2) = ~
        \binmin{}{} & + \frac{\subbinwidth (\nsubbins-1)}{2} + \frac{\subbinwidth}{\bincount} \sum_{\subbinidx=0}^{\nsubbins-1} \subbinidx \subbincountdelta_{\subbinidx} \nonumber\\[-0.2em]
        & + \lambda_1 \sum_{\subbinidx=0}^{\nsubbins-1} \subbincountdelta_{\subbinidx} + \lambda_2 \left(\chisq{\testsignificance} - \frac{\nsubbins}{\bincount} \sum_{\subbinidx=0}^{\nsubbins-1} \subbincountdelta_{\subbinidx}^2\right).
    \end{align}
    Setting the partial derivative $ \partial \mathcal{L} / \partial \subbincountdelta_{\subbinidx} $ equal to zero gives $ \subbincountdelta_{\subbinidx} = (\subbinidx \subbinwidth + \bincount \lambda_1) / 2s\lambda_2 $.
    Substituting this into the constraints in \autoref{eq:theorem1:constraints} and solving for $ \lambda_1 $ and $ \lambda_2 $ gives the following expression for $ \subbincountdelta_{\subbinidx} $:
    \begin{equation}
        \subbincountdelta_{\subbinidx} = \pm \frac{2}{\nsubbins} \left(\subbinidx - \frac{\nsubbins(\nsubbins-1)}{2}\right) \sqrt{\frac{3 \chisq{\testsignificance} \bincount}{\nsubbins^2 - 1}}.
    \end{equation}
    Taking the negative solution and substituting this into \autoref{eq:centre_weighted_min_with_epsilon} gives the desired result.
    A similar approach can be used for~$ \binmax{}{} $.
\end{proof}

Thus, bin weighted centre bounds are as follows:
\begin{equation}  \label{eq:bin_centre_bounds}
    \binmidpoint^{\pm} =
    \begin{cases}
        \binminmax^{\pm} \mp \frac{(\binuniquecount - 1) \binuniquecount \datatypequanta}{2 \bincount}, & \bincount < \minpts \\
        \binmin{}{} + \frac{(\nsubbins \pm 1) \subbinwidth}{2} \pm \frac{\subbinwidth}{6} \sqrt{\frac{3 \chisq{\testsignificance} (\nsubbins^2 - 1)}{\bincount}}, & \text{otherwise.}
    \end{cases}
\end{equation}

\subsection{Storage}
\label{sec:histogram:storage}

To minimise \histname{} storage requirements, we observe that bin midpoints and weighted centre bounds can easily be re-derived from other parameters and thus need not be stored.
Additionally, bin counts, which require the most storage, are stored sparsely when this is more effective.
For sparse encoding, we store the delta between non-zero indices and encode using Golomb coding, which is optimal for geometrically distributed data.
\autoref{fig:storage} provides an overview of the storage configuration.
In total, the storage requirements are:
\begin{align}
    \storage &= \storage_{\text{params}} + \storage_{\text{1-d hists}} + \storage_{\text{2-d hists}} + \storage_{\text{counts}} \\
    &\le 29 + \ndims + 4\ndims^2 \nonumber\\[-0.4em]
    & \hspace{2em} + \sum_{i=1}^{\ndims} \big(3\bytesrequired^{(i)} + 4 \big) \Bigg( \sum_{j=1}^{d} \nbins^{(i|j)} - (\ndims - 1) \nbins^{(i)} \Bigg) \nonumber\\[-0.2em]
    & \hspace{2em} + \sum_{i=1}^{\ndims}\sum_{j=1}^{\ndims} \big\lceil{ \nbins^{(i|j)} \nbins^{(j|i)} \lencount^{(ij)} \ / \ 8 \big\rceil} \text{ bytes,}
\end{align}
where $ \lencount $ is the number of bits per bin count, i.e.,
\begin{equation}
    \lencount^{(ij)} = \Big\lceil \log_2 \Big(1 + \max_ {\binidx_{i},\binidx_{j}}{\left(\bincount^{(ij)}_{\binidx_{i}\binidx_{j}}\right)}\Big) \Big\rceil,
\end{equation}
and $ \bytesrequired^{(i)} $ is the number of bytes per value in the $ i $th dimension.
In \autoref{fig:storage}, $ \countstorageindicator^{(ij)} $ is a binary variable that indicates whether the $ (ij) $th histogram is stored densely or sparsely, $ \nbinsnonzero^{(ij)} $ is the number of non-zero values in $ \bincounts^{(ij)} $ and $ \golomb(\cdot) $ is Golomb coding.

    \section{Query Execution}
\label{sec:query_execution}

\begin{figure*}[!t]
    \centering
    \footnotesize
    \newcommand{\figqueryvspacing}{0.9em}
\newcommand{\figqueryoperatorwidth}{3.1em}
\newcommand{\figqueryhistcolour}{colour1!30}
\newcommand{\figqueryhighlightcolour}{colour1!10}

\tikzset{
    every node/.style={
        text depth=0.3em,
        text height=1.0em,
        inner sep=0.4em,
        align=left,
        anchor=west
    },
    predicate node/.style={
        draw, rounded corners, fill=black!5
    },
    preprocessing node/.style={
        anchor=east,
        align=right,
        text depth=0.0em,
        text height=0.5em,
        yshift=-\figqueryvspacing-1.0em,
        font=\tiny
    },
}

\pgfplotsset{
    1d histogram plot/.style={
        ybar,
        width=6em, height=2.5em,
        axis background/.style={fill=white},
        ymin=0,
        grid=none,
        ticks=none,
        scale only axis=true,
        axis on top=true,
    },
}

\begin{tikzpicture}
    \matrix[matrix anchor=center, inner sep=0em] (query) at (0,0) {
        \node[predicate node, label=above:{$ \predicate_1 $}] (p1) {\textit{dist} $ > $ 150};
        \node[minimum width=\figqueryoperatorwidth] (and1) at (p1.east) {AND};
        \node[predicate node, label=above:{$ \predicate_2 $}] (p2) at (and1.east) {\textit{dist} $ < $ 300 };
        \node[minimum width=\figqueryoperatorwidth] (or1) at (p2.east) {OR};
        \node[predicate node, label=above:{$ \predicate_3 $}] (p3) at (or1.east) {\textit{dist} $ < $ 450 };
        \node[minimum width=\figqueryoperatorwidth] (and2) at (p3.east) {AND};
        \node[predicate node, label=above:{$ \predicate_4 $}] (p4) at (and2.east) {\textit{air time} $ > $ 90.5};
        \\
    };

    \node[predicate node, anchor=north, yshift=-2*\figqueryvspacing-1.4em] (p1t) at (p1.south) {$ \colvar_2 > 81 $};
    \node[predicate node, anchor=center] (p2t) at (p1t -| p2) {$ \colvar_2 > 231 $};
    \node[predicate node, anchor=center] (p3t) at (p1t -| p3) {$ \colvar_2 < 381 $};
    \node[predicate node, anchor=center] (p4t) at (p1t -| p4) {$ \colvar_3 > 655 $};
    \draw[->] (p1.south) -- (p1t.north);
    \draw[->] (p2.south) -- (p2t.north);
    \draw[->] (p3.south) -- (p3t.north);
    \draw[->] (p4.south) -- (p4t.north);
    \node[preprocessing node] (gdp1) at (p1.south) {$ -69 $ \\ $ \times 1 $};
    \node[preprocessing node] () at (p2.south) {$ -69 $ \\ $ \times 1 $ };
    \node[preprocessing node] () at (p3.south) {$ -69 $ \\ $ \times 1 $};
    \node[preprocessing node] (gdp4) at (p4.south) {$ -25 $ \\$ \times 10 $ };

    \begin{axis}[
        1d histogram plot,
        xmin=0, xmax=500,
        name=p1hist,
        yshift=-\figqueryvspacing,
        at={(p1t.south)},
        anchor=north,
        ]
        \addplot [fill=black!15, draw=none, bar width=81, bar shift=40.5] coordinates {(0,441)};
        \addplot [fill=\figqueryhistcolour, draw=none, bar width=19, bar shift=9.5] coordinates {(81,441)};
        \addplot [fill=\figqueryhistcolour, draw=none, bar width=100, bar shift=50] coordinates {(100,385) (200,291) (300,168) (400,81)};
    \end{axis}
    \begin{axis}[
        1d histogram plot,
        xmin=0, xmax=500,
        name=p2hist,
        yshift=-\figqueryvspacing,
        at={(p2t.south)},
        anchor=north,
        ]
        \addplot [fill=\figqueryhistcolour, draw=none, bar width=100, bar shift=50] coordinates {(0,441) (100,385)};
        \addplot [fill=\figqueryhistcolour, draw=none, bar width=31, bar shift=15.5] coordinates {(200,291)};
        \addplot [fill=black!15, draw=none, bar width=69, bar shift=34.5] coordinates {(231,291)};
        \addplot [fill=black!15, draw=none, bar width=100, bar shift=50] coordinates {(300,168) (400,81)};
    \end{axis}
    \begin{axis}[
        1d histogram plot,
        xmin=0, xmax=500,
        name=p3hist,
        yshift=-\figqueryvspacing,
        at={(p3t.south)},
        anchor=north,
        ]
        \addplot [fill=\figqueryhistcolour, draw=none, bar width=100, bar shift=50] coordinates {(0,441) (100,385) (200,291)};
        \addplot [fill=\figqueryhistcolour, draw=none, bar width=81, bar shift=40.5] coordinates {(300,168)};
        \addplot [fill=black!15, draw=none, bar width=19, bar shift=9.5] coordinates {(381,168)};
        \addplot [fill=black!15, draw=none, bar width=100, bar shift=50] coordinates {(400,81)};
    \end{axis}
    \begin{axis}[
        1d histogram plot,
        xmin=0, xmax=320,
        name=p4hist,
        yshift=-\figqueryvspacing,
        at={(p4t.south)},
        anchor=north,
        ]
        \addplot [fill=black!15, draw=none, bar width=40, bar shift=20] coordinates {(0,84)};
        \addplot [fill=black!15, draw=none, bar width=25, bar shift=12.5] coordinates {(40,112)};
        \addplot [fill=\figqueryhistcolour, draw=none, bar width=15, bar shift=7.5] coordinates {(65,112)};
        \addplot [fill=\figqueryhistcolour, draw=none, bar width=40, bar shift=20] coordinates {(80,195) (120,168) (160,313) (200,223) (240,109) (280,37)};
    \end{axis}
    \draw[->] (p1t.south) -- (p1hist.north);
    \draw[->] (p2t.south) -- (p2hist.north);
    \draw[->] (p3t.south) -- (p3hist.north);
    \draw[->] (p4t.south) -- (p4hist.north);
    \node[anchor=north east, inner sep=0.13em] () at (p1hist.north east) {\tiny $ \bincounts^{\myscale{(2)}} $};
    \node[anchor=north east, inner sep=0.13em] () at (p2hist.north east) {\tiny $ \bincounts^{\myscale{(2)}} $};
    \node[anchor=north east, inner sep=0.13em] () at (p3hist.north east) {\tiny $ \bincounts^{\myscale{(2)}} $};
    \node[anchor=north west, inner sep=0.13em, xshift=0.1em] () at (p4hist.north west) {\tiny $ \bincounts^{\myscale{(3)}} $};

    \node[anchor=north, yshift=-\figqueryvspacing, font={\tiny}] (beta1) at (p1hist.south) {$ \coverages_{1}^{(2)} = \langle 0.19, 1, 1, 1, 1 \rangle $};
    \node[anchor=north, yshift=-\figqueryvspacing, font={\tiny}] (beta2) at (p2hist.south) {$ \coverages_{2}^{(2)} = \langle1, 1, 0.31, 0, 0 \rangle $};
    \node[anchor=north, yshift=-\figqueryvspacing, font={\tiny}] (beta3) at (p3hist.south) {$ \coverages_{3}^{(2)} = \langle1, 1, 1, 0.81, 0 \rangle $};
    \node[anchor=north, yshift=-\figqueryvspacing, font={\tiny}] (beta4) at (p4hist.south) {$ \coverages_{4}^{(3)} = \langle 0, 0.375, 1, \ldots \rangle $};
    \draw[->] (p1hist.south) -- (beta1.north);
    \draw[->] (p2hist.south) -- (beta2.north);
    \draw[->] (p3hist.south) -- (beta3.north);
    \draw[->] (p4hist.south) -- (beta4.north);

    \draw[draw opacity=0] (beta1.south) -- (beta2.south) node[draw opacity=1, midway, anchor=north, yshift=-2*\figqueryvspacing, font={\scriptsize}] (beta12) {$ \coverages_{12}^{(2)} = \langle 0.19, 1, 0.31, 0, 0 \rangle $};
    \draw[->] (beta1.south) -- ([yshift=-1*\figqueryvspacing]beta1.south) -- ([yshift=-1*\figqueryvspacing]beta1.south -| beta12.north) -- (beta12.north);
    \draw[->] (beta2.south) -- ([yshift=-1*\figqueryvspacing]beta2.south) -- ([yshift=-1*\figqueryvspacing]beta2.south -| beta12.north) -- (beta12.north);

    \matrix[matrix anchor=north, yshift=-2*\figqueryvspacing] (weightings) at (beta12.south -| query.south) {
        \node[inner sep=0] (w) {
            $ \weightings^{(1)} = \boldsymbol{1} -
            \left(\boldsymbol{1} - \bincounts^{(12)} \coverages_{12}^{(2)} \hadamarddivision \bincounts^{(1)} \right)
            \hadamardproduct \left( \boldsymbol{1} - \bincounts^{(12)} \coverages_{3}^{(2)} \hadamardproduct \bincounts^{(13)} \coverages_{4}^{(3)} \hadamarddivision \left(\bincounts^{(1)}\right)^2 \right) $ }; \\
    };
    \draw[->] (beta12.south) -- ([yshift=-1*\figqueryvspacing]beta12.south) -- ([yshift=-1*\figqueryvspacing, xshift=-5.7em]beta12.south -| weightings.north) -- ([xshift=-5.7em]weightings.north);
    \draw[->] (beta3.south) -- ([yshift=-3*\figqueryvspacing]beta3.south) -- ([yshift=-3*\figqueryvspacing, xshift=4.2em]beta3.south -| weightings.north) -- ([xshift=4.2em]weightings.north);
    \draw[->] (beta4.south) -- ([yshift=-3*\figqueryvspacing]beta4.south) -- ([yshift=-3*\figqueryvspacing, xshift=9.5em]beta4.south -| weightings.north) -- ([xshift=9.5em]weightings.north);

    \scoped[on background layer] {
        \node[anchor=center, fill=\figqueryhighlightcolour, rounded corners, minimum height=2.2em, minimum width=10.8em] (delayedtransformation) at (beta12) {};

        \node[anchor=center, fill=\figqueryhighlightcolour, rounded corners, minimum height=2.5em, minimum width=15.0em] (andoperator) at ([xshift=8.9em]weightings) {};

        \node[anchor=center, fill=\figqueryhighlightcolour, rounded corners, minimum height=2.5em, minimum width=8em] (transform) at ([xshift=-6.1em]weightings) {};

        \node[anchor=center, fill=\figqueryhighlightcolour, rounded corners, fit={(gdp1) (gdp4)}, yshift=0.4em] (gdpreprocessing) {};  
    };

    \node[font=\tiny, anchor=west, align=left, yshift=-0.35em, xshift=-0.2em] () at (delayedtransformation.east) {Delayed \\ transformation};

    \node[font=\tiny, inner sep=0em, anchor=north, align=center, yshift=-0.7em] () at (transform.south) {Transform coverage to \\ aggregation column};

    \node[font=\tiny, inner sep=0em, anchor=north, yshift=0em] () at (andoperator.south) {AND relation};

    \node[font=\tiny, anchor=west, align=left, yshift=-0.3em] () at (gdpreprocessing.east) {GreedyGD\\pre-process};

\end{tikzpicture}
    \caption{
        Partial query execution for a query aggregating on column 1 with predicates on columns 2 (dist) and 3 (air time), including applying GreedyGD pre-processing, computing coverage for each predicate and bin weightings.
    }
    \label{fig:query_execution}
\end{figure*}
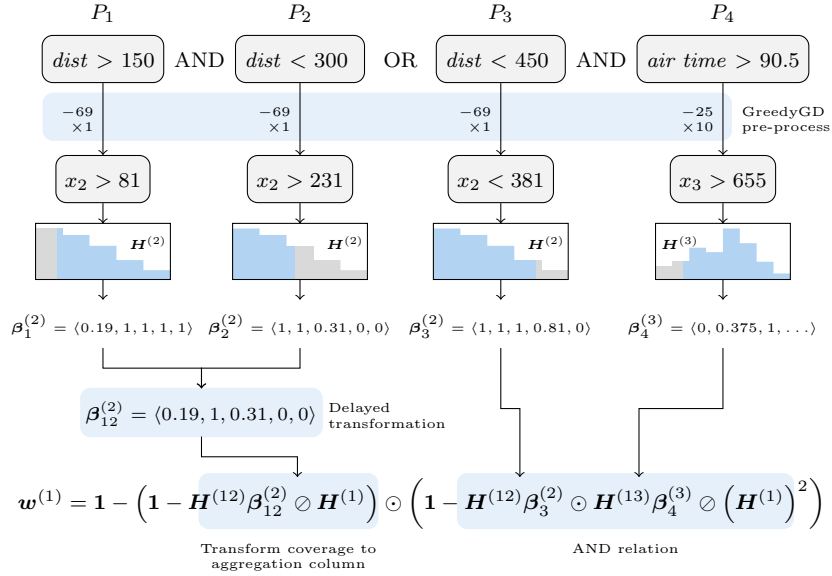

This section explains how AQP tasks are executed using \histname{} executes (top section of \autoref{fig:system}), which is illustrated in \autoref{fig:query_execution}.

\subsection{SQL parsing}

SQL queries are parsed by applying GreedyGD pre-processing to predicate literals so that they are in the same domain as the compressed data on which \histname{} is built.
For example, in \autoref{fig:query_execution}, the \textit{dist} column's minimum value of 69 is subtracted from predicates 1--3, while the \textit{air time} column's minimum value of 25 is subtracted from predicate 4.
Predicate 4 is also multiplied by 10 to convert from floating point to integer.

\subsection{Coverage}

Given a condition $ \predicate $ that applies to column $ j $, we define \textit{coverage}, $ \coverages^{(j)} $, as the $ \nbins^{(j)} \times 1 $ vector whose $ \binidx $th element is the probability that a point in the $ \binidx $th bin of the one-dimensional histogram for column $ j $ satisfies $ \predicate $.
That is,
\begin{equation}
    \coverage^{(j)}_{\binidx} = \prob \Big( \predicate \setsep \edge_{\binidx}^{(j)} \le x < \edge_{\binidx + 1}^{(j)} \Big)
\end{equation}
where $ \prob $ is the probability operator.
This is estimated as follows for equality conditions (inverse for inequality):
\begin{equation}
    \coverage^{(j)}_{\binidx} =
    \begin{cases}
        0, & \text{condition value outside bin} \\
        1 / \binuniquecount_{\binidx}^{(j)}, & \text{otherwise,}
    \end{cases}
\end{equation}
and for all other condition types (i.e., $ \le $, $ < $, $ \ge $, $ > $) as:
\begin{equation}
    \coverage^{(j)}_{\binidx} =
    \begin{cases}
        0, & \binmin{t}{(j)} \text{ and } \binmax{t}{(j)} \text{ fail } \predicate \\
        1, & \binmin{t}{(j)} \text{ and } \binmax{t}{(j)} \text{ satisfy } \predicate \\
        0.5, & \text{one of } \binmin{t}{(j)}, \binmax{t}{(j)} \text{ satisfy } \predicate \text{ \& } \binuniquecount_{\binidx}^{(j)} = 2 \\
        f_{\binidx}(\predicate), & \text{otherwise,}
    \end{cases}
\end{equation}
where $ f_{\binidx}(\predicate) $ is the fraction of the bin width ($ \binwidth_{\binidx} $) that satisfies $ \predicate $.

Coverage is estimated separately for each predicate condition, as illustrated in \autoref{fig:query_execution} with $ \coverages_{1} $ to $ \coverages_{4} $ corresponding to predicates $ \predicate_1 $ to $ \predicate_4 $.
Groups of conditions that apply to the same column are consolidated.
By `group' we mean any group of two or more conditions that are directly connected by a single AND or OR operator.
For example, in \autoref{fig:query_execution}, $ \coverages_{1} $ and $ \coverages_{2} $ are consolidated into $ \coverages_{12} $ since they both apply to the \textit{dist} column and are directly connected by an AND operator.
However, despite $ \predicate_3 $ also applying to the same column, it is not consolidated since it must be first combined with $ \predicate_4 $ due to operator precedence (AND before OR).
We refer to this process as \textit{delayed transformation} because we delay transforming coverages into weightings in the aggregation dimension (\autoref{sec:query_execution:weightings}) to consolidate same-column conditions.

Coverage bounds are also computed and used to bound the query result.
The source of uncertainty here is the unknown distribution of data points within partially covered bins.
Applying similar logic to \autoref{sec:histogram:bin_centres}, we can show that, for bins that do not pass the hypothesis test in \algoIsUniform{} (i.e., bins with $ \bincount < \minpts $), coverage bounds occur when only one data point or all but one data points satisfy $ \predicate $.
However, for bins with at least $ \minpts $ points, tighter bounds can be derived by considering the \textit{partial bin count}, which is defined as the number of data points in a subset of the bin's sub-bins.
\autoref{theorem:bin_partial_count} provides the key result for these bounds.

\begin{theorem}
    \label{theorem:bin_partial_count}
    Consider a histogram bin with count $ h $ and $ u $ unique values.
    Assume that this bin satisfies the hypothesis test in \textup{\algoIsUniform{}} with $ s = \big\lceil(2u)^{1/3}\big\rceil $ sub-bins and critical value $ \chisq{\testsignificance} $.
    The minimum and maximum partial bin count over $ \nsubbinscovered \le s $ sub-bins is then:
    \begin{equation}
        \partialbincount{\nsubbinscovered}{\nsubbins}^{\pm} = \frac{\bincount \nsubbinscovered}{\nsubbins} \pm \frac{\bincount \nsubbinscovered}{\nsubbins} \sqrt{\frac{\chisq{\testsignificance} (\nsubbins - \nsubbinscovered)} {\bincount \nsubbinscovered}} \text{.}
    \end{equation}
\end{theorem}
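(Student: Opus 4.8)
The plan is to mirror the constrained-optimization argument used for \cref{theorem:centre_bounds}. First I would write each sub-bin count as its expected value plus a deviation, $\subbincount_{\subbinidx} = \bincount/\nsubbins + \subbincountdelta_{\subbinidx}$, where $\subbincountexpected = \bincount/\nsubbins$ is the expected count under the null. For any fixed set $\mathcal{S}$ of $\nsubbinscovered$ sub-bins, the partial bin count is then $\sum_{\subbinidx \in \mathcal{S}} \subbincount_{\subbinidx} = \bincount\nsubbinscovered/\nsubbins + \sum_{\subbinidx \in \mathcal{S}} \subbincountdelta_{\subbinidx}$. The constant term $\bincount\nsubbinscovered/\nsubbins$ is exactly the expected partial count, so the task reduces to bounding the linear functional $\sum_{\subbinidx \in \mathcal{S}} \subbincountdelta_{\subbinidx}$ over the admissible deviation vectors.

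\textbf{Constraints.} The deviations satisfy two constraints. Conservation of the total count gives $\sum_{\subbinidx=0}^{\nsubbins-1} \subbincountdelta_{\subbinidx} = 0$, and passing the hypothesis test in \algoIsUniform{} means the test statistic does not exceed the critical value, which (using $\subbincountexpected = \bincount/\nsubbins$) rewrites as $\tfrac{\nsubbins}{\bincount}\sum_{\subbinidx=0}^{\nsubbins-1} \subbincountdelta_{\subbinidx}^2 \le \chisq{\testsignificance}$. Since the objective is linear in the deviations and the feasible set is the slice of a Euclidean ball by the hyperplane $\sum_{\subbinidx} \subbincountdelta_{\subbinidx} = 0$, the extremum is attained on the sphere; hence for the extremal configuration I may replace the inequality by the equality $\sum_{\subbinidx} \subbincountdelta_{\subbinidx}^2 = \bincount\chisq{\testsignificance}/\nsubbins$.

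\textbf{Optimization.} I would then extremize $\sum_{\subbinidx \in \mathcal{S}} \subbincountdelta_{\subbinidx}$ by Lagrange multipliers with these two constraints, exactly as in \cref{theorem:centre_bounds}. Stationarity forces $\subbincountdelta_{\subbinidx}$ to depend only on whether $\subbinidx \in \mathcal{S}$, so the optimizer takes just two values: a common value $a$ on the $\nsubbinscovered$ sub-bins of $\mathcal{S}$ and a common value $b$ on the remaining $\nsubbins - \nsubbinscovered$. The zero-sum constraint gives $\nsubbinscovered a + (\nsubbins - \nsubbinscovered)b = 0$, and substituting into the norm constraint yields $a^2 = \chisq{\testsignificance}\bincount(\nsubbins - \nsubbinscovered)/(\nsubbins^2 \nsubbinscovered)$. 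The extremal partial-count deviation is therefore $\sum_{\subbinidx \in \mathcal{S}} \subbincountdelta_{\subbinidx} = \nsubbinscovered a = \pm \tfrac{1}{\nsubbins}\sqrt{\chisq{\testsignificance}\bincount\,\nsubbinscovered(\nsubbins - \nsubbinscovered)}$, and a short rearrangement recasts this as the $\pm$ term of $\partialbincount{\nsubbinscovered}{\nsubbins}^{\pm}$, with the plus sign giving the maximum and the minus sign the minimum. A cleaner route avoids multipliers entirely: using $\sum_{\subbinidx \in \mathcal{S}} \subbincountdelta_{\subbinidx} = -\sum_{\subbinidx \notin \mathcal{S}} \subbincountdelta_{\subbinidx}$ and applying Cauchy--Schwarz to $\mathcal{S}$ and its complement separately yields the same bound, with equality characterising the same two-value optimizer.

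\textbf{Main obstacle.} The delicate steps are the two justifications that the optimum has the two-value structure: that the extremum lies on the chi-squared boundary (rather than strictly inside, where the linear functional could not be stationary), and that, by the invariance of both constraints under permutations fixing $\mathcal{S}$, only the cardinality $\nsubbinscovered$ and not the particular subset matters. A secondary gap, also left implicit in \cref{theorem:centre_bounds}, is that the optimizing counts $\bincount/\nsubbins + \subbincountdelta_{\subbinidx}$ need not be non-negative integers; I would note that the result is consequently a conservative (continuously relaxed) envelope, which is precisely what is required for valid query bounds.
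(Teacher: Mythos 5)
Your proposal is correct and follows essentially the same route as the paper's proof: Lagrange multipliers applied to the partial count under the same two constraints (conservation of the total count $\bincount$ and the chi-squared statistic pinned at the critical value $\chisq{\testsignificance}$), yielding the same two-valued optimizer and hence the identical bound after rearrangement. The refinements you add---justifying that the extremum lies on the chi-squared boundary rather than strictly inside, observing that only the cardinality $\nsubbinscovered$ of the covered subset matters, the Cauchy--Schwarz shortcut, and the caveat about the continuous relaxation of integer counts---are absent from the paper's proof (which simply writes the constraint as an equality and takes the covered sub-bins to be the first $\nsubbinscovered$ indices), but they strengthen rather than alter the argument.
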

\begin{proof}
    Use Lagrange Multipliers to optimise the partial count $ \sum_{\subbinidx=1}^{\nsubbinscovered} \subbincount_{\subbinidx} $, where $ \subbincount_{\subbinidx} $ is the count for the $ \subbinidx $th sub-bin, given constraints:
    \begin{equation} \label{eq:theorem2:constraints}
        \sum_{\subbinidx=1}^{\nsubbins} \subbincount_{\subbinidx} = h
        \text{ and }
        \chisq{\testsignificance} = \sum_{\subbinidx=1}^{\nsubbins}\frac{(\subbincount_{\subbinidx} - \bincount/\nsubbins)^2}{\bincount/\nsubbins}.
    \end{equation}
    The Lagrangian is then:
    \begin{align}
        \mathcal{L}(\subbincount_{\subbinidx}, \lambda_1, \lambda_2) = \sum_{\subbinidx=1}^{\nsubbinscovered} \subbincount_{\subbinidx} &+ \lambda_1 \left(\bincount - \sum_{\subbinidx=1}^{\nsubbins} \subbincount_{\subbinidx}\right) \nonumber \\[-0.2em]
        & + \lambda_2 \left(\chisq{\testsignificance} - \sum_{\subbinidx=1}^{\nsubbins}\frac{(\subbincount_{\subbinidx} - h/\nsubbins)^2}{\bincount/\nsubbins}\right) \text{.}
    \end{align}
    Setting the partial derivative of $ \mathcal{L} $ with respect to $ \subbincount_{\subbinidx} $ to zero gives
    \begin{align}
        \subbincount_{\subbinidx} =
        \begin{cases}
            \frac{h}{\nsubbins} + \frac{\bincount(1-\lambda_1)}{2\nsubbins\lambda_2}, & \subbinidx < \nsubbinscovered \\
            \frac{h}{\nsubbins} - \frac{\bincount \lambda_1}{2\nsubbins\lambda_2}, & \subbinidx \ge \nsubbinscovered
        \end{cases}
        \label{eq:sub_bin_count_initial_partial}
    \end{align}
    Substituting this into the two constraints in \autoref{eq:theorem2:constraints} and solving for $ \lambda_1 $ and $ \lambda_2 $ gives the following result for $ \subbincount_{\subbinidx} $:
    \begin{align}
        \subbincount_{\subbinidx} =
        \begin{cases}
            \frac{\bincount}{\nsubbins} \pm \frac{\bincount}{\nsubbins} \sqrt{\chisq{\testsignificance} (\nsubbins - \nsubbinscovered) \, / \, \bincount \nsubbinscovered}, & \subbinidx < \nsubbinscovered \text{,} \\[0.3em]
            \frac{\bincount}{\nsubbins} \mp \frac{\bincount}{\nsubbins} \sqrt{\chisq{\testsignificance} \nsubbinscovered \, / \, \bincount (\nsubbins - \nsubbinscovered)} & \subbinidx \ge \nsubbinscovered \text{.} \\
        \end{cases}
        \label{eq:sub_bin_count_final}
    \end{align}
    Multiplying the solution for $ \subbinidx < \nsubbinscovered $ by $ \nsubbinscovered $ gives the desired result.
\end{proof}

Given \autoref{theorem:bin_partial_count}, coverage bounds $ \coverages^{-} $ and $ \coverages^{+} $ can be obtained by dividing the partial bin count by the total bin count with the appropriate number of sub-bins.
That is,
\begin{align}
    \coverage_{\binidx}^{-(j)} &=
    \begin{cases}
        \coverage^{(j)}_{\binidx}, & \coverage_{\binidx}^{(j)} \in \{0, 1\} \\
        1 \, / \, \bincount_{\binidx}^{(j)}, & \coverage_{\binidx}^{(j)} \notin \{0, 1\} \text{ \& } \bincount_{\binidx}^{(j)} < \minpts \\
        \frac{\nsubbinsfullycovered}{\nsubbins} - \frac{\nsubbinsfullycovered}{\nsubbins} \sqrt{ \chisq{\testsignificance} (\nsubbins - \nsubbinsfullycovered) / \bincount \nsubbinsfullycovered}, & \text{otherwise,}
    \end{cases} \label{eq:coverage_low} \\[0.3em]
    \coverage_{\binidx}^{+(j)} &=
    \begin{cases}
        \coverage^{(j)}_{\binidx}, & \coverage_{\binidx}^{(j)} \in \{0, 1\} \\
        1 - 1 \, / \, \bincount_{\binidx}^{(j)}, & \coverage_{\binidx}^{(j)} \notin \{0, 1\} \text{ \& } \bincount_{\binidx}^{(j)} < \minpts \\
        \frac{\nsubbinspartiallycovered}{\nsubbins} + \frac{\nsubbinspartiallycovered}{\nsubbins} \sqrt{ \chisq{\testsignificance} (\nsubbins - \nsubbinspartiallycovered) / \bincount \nsubbinspartiallycovered}, & \text{otherwise,}
    \end{cases} \label{eq:coverage_high}
\end{align}
where $ \nsubbins = \big\lceil\big(2 u_{\binidx}^{(j)}\big)^{1/3} \big\rceil $ is the number of sub-bins, $ \nsubbinsfullycovered = \floor{\coverage_{\binidx}^{(j)} \nsubbins} $ is the number of sub-bins that are fully covered by the predicate condition and $ \nsubbinspartiallycovered = \ceil{\coverage_{\binidx}^{(j)} \nsubbins} $ is the number of fully or partially covered sub-bins.

\begin{table*}[!t]
    \centering
    \caption{Aggregation functions}
    \hspace{\tablecaptiongapskrinkage}
    \scriptsize
    \newcommand{\aggtablespacing}{0.5em}
\begin{tabular}{llll}
    \toprule
    \textbf{Aggregation} & \textbf{Estimate}        & \textbf{Lower bound} & \textbf{Upper bound} \\ \midrule
    COUNT                & $ \lonenorm{\weightings} \ / \ \sr $
&
$ \lonenorm{\weightings^{-}} \ / \ \sr $
&
$ \lonenorm{\weightings^{+}} \ / \ \sr $
  \\ [\aggtablespacing]
    SUM                  & $ \weightings \dotprod \binmidpoints \ / \ \sr $
&
$ \weightings^{-} \dotprod \binmidpoints^{-} \ / \ \sr $
&
$ \weightings^{+} \dotprod \binmidpoints^{+} \ / \ \sr $
    \\ [\aggtablespacing]
    AVG                  & $ \weightings \dotprod \binmidpoints \ / \ \lonenorm{\weightings} $
&
$ \min\limits_{\{\weightings^{-}, \ \weightings^{+}\}} \left\{ \weightingsplaceholder \dotprod \binmidpoints^{-} \ / \  \lonenorm{\weightingsplaceholder}  \right\} $
&
$ \max\limits_{\{\weightings^{-}, \ \weightings^{+}\}} \left\{ \weightingsplaceholder \dotprod \binmidpoints^{+} \ / \  \lonenorm{\weightingsplaceholder}  \right\} $
\vspace{0.6em}
    \\ [\aggtablespacing]
    MIN                  & $
\begin{cases}
    \binmax{\binkey}{}, & \text{1-d, } \binuniquecount_{\binkey} = 2 \text{ \& } \weighting_{\binkey} < \frac{\bincount_{\binkey}}{2} \\
    \binmin{\binkey}{}, & \text{otherwise.} \\
\end{cases}
$
&
$
\begin{cases}
    \binmax{\binkey}{}, & \text{1-d, } \binuniquecount_{\binkey} = 2 \text{ \& } \weighting_{\binkey}^{+} < \frac{\bincount_{\binkey}}{5} \\
    \binmin{\binkey}{}, & \text{otherwise.} \\
\end{cases}
$
&
$
\begin{cases}
    \binmax{\binkey}{} - \nsubbinsfullycovered \subbinwidth_{\binkey}, & \text{1-d, } \binuniquecount_{\binkey} > 2 \text{ \& } \bincount_{\binkey} > \minpts \\
    \binmax{\binkey}{},                                & \text{otherwise,}
\end{cases}
$
\vspace{0.7em}
    \\ [\aggtablespacing]
    MAX                  & $
\begin{cases}
    \binmin{\binkey}{}, & \text{1-d, } \binuniquecount_{\binkey} = 2 \text{ \& } \weighting_{\binkey} < \frac{\bincount_{\binkey}}{2} \\
    \binmax{\binkey}{}, & \text{otherwise.} \\
\end{cases}
$
&
$
\begin{cases}
    \binmin{\binkey}{} + \nsubbinsfullycovered \subbinwidth_{\binkey}, & \text{1-d, } \binuniquecount_{\binkey} > 2 \text{ \& } h_{\binkey} > \minpts \\
    \binmin{\binkey}{},                                               & \text{otherwise,}
\end{cases}
$
&
$
\begin{cases}
    \binmin{\binkey}{}, & \text{1-d, } \binuniquecount_{\binkey} = 2 \text{ \& } \weighting_{\binkey}^{+} < \frac{\bincount_{\binkey}}{5} \\
    \binmax{\binkey}{}, & \text{otherwise.} \\
\end{cases}
$
\vspace{0.6em}
    \\ [\aggtablespacing]
    MEDIAN               & $
\begin{cases}
    \binmin{\binkey}{}, & \binuniquecount_{\binkey} = 2 \ \& \ \fractomedian < 0.5 \\
    \binmax{\binkey}{}, & \binuniquecount_{\binkey} = 2 \ \& \ \fractomedian \ge 0.5 \\
    \binmin{\binkey}{} + \binwidth_{\binkey} \fractomedian, & \text{otherwise,}
\end{cases}
$
&
$ \binmin{\binkey}{} $
&
$ \binmax{\binkey}{} $
\vspace{-0.3em}
 \\ [\aggtablespacing]
    VAR                  & $
\weightings \cdot \binmidpoints^2 \ / \ \lonenorm{\weightings} - \left( \weightings \cdot \binmidpoints \ / \ \lonenorm{\weightings} \right)^2 \text{.}
$
&
{$
\begin{aligned}
    \min\limits_{\{\weightings^{-}, \ \weightings^{+}\}}
    \Big\{ &
    \weightingsplaceholder \cdot (\varboundsxs^{-})^2 \,/\, \lonenorm{\weightingsplaceholder} \\[-1.0em]
    & - \left( \weightingsplaceholder \cdot \varboundsxs^{-} \,/\, \lonenorm{\weightingsplaceholder} \right) ^2
    \Big\}
\end{aligned}
$}
&
{$
\begin{aligned}
    \max\limits_{\{\weightings^{-}, \ \weightings^{+}\}}
    \Big\{ &
    \weightingsplaceholder \cdot (\varboundsxs^{+})^2 \,/\, \lonenorm{\weightingsplaceholder} \\[-1.0em]
    & - \left( \weightingsplaceholder \cdot \varboundsxs^{+} \,/\, \lonenorm{\weightingsplaceholder} \right) ^2
    \Big\}
\end{aligned}
$}
    \\ [\aggtablespacing] \bottomrule
\end{tabular}

    \label{tab:aggregations}
\end{table*}

\subsection{Weightings}
\label{sec:query_execution:weightings}

Given query predicate $ \predicates $ containing conditions $ \predicate_1, \ldots, \predicate_{\npredicates} $ and aggregation column $ i $, we define bin \textit{weightings}, $ \weightings^{(i)} $, as the $ \nbins^{(i)} \times 1 $ vector whose $ \binidx $th element is the estimated number of points in the $ \binidx $th bin of the one-dimensional histogram for column $ i $ that satisfy $ \predicates $.
This can be expressed as follows:
\begin{equation}
    \weighting^{(i)}_{\binidx} = \bincount_{\binidx}^{(i)} \prob \Big( \predicates \setsep \edge_{\binidx}^{(i)} \le x < \edge_{\binidx + 1}^{(i)} \Big).
\end{equation}
For predicates that are the intersection (AND) of multiple conditions $ \predicate_1, \ldots, \predicate_{\npredicates} $ and assuming conditional independence between predicate conditions, this can be expressed as follows:
\begin{align}
    \weighting^{(i)}_{\binidx}
    &= \bincount_{\binidx}^{(i)} \prob \left( \predicate_1 \cap \predicate_2 \cap \ldots \cap \predicate_{\npredicates} \setsep \edge_{\binidx}^{(i)} \le \colvar < \edge_{\binidx + 1}^{(i)} \right) \nonumber \\
    &= \bincount_{\binidx}^{(i)} \prod_{\predicateidx=1}^{\npredicates} \prob \left( \predicate_{\predicateidx} \setsep \edge_{\binidx}^{(i)} \le \colvar < \edge_{\binidx + 1}^{(i)} \right).
\end{align}
Likewise, for predicates that are the union (OR) of multiple conditions and again assuming conditional independence between predicate conditions, weightings can be expressed as follows:
\begin{align}
    \weighting^{(i)}_{\binidx}
    &= \bincount_{\binidx}^{(i)} \prob \left( \predicate_1 \cup \predicate_2 \cup \ldots \cup \predicate_{\npredicates} \setsep \edge_{\binidx}^{(i)} \le \colvar < \edge_{\binidx + 1}^{(i)} \right) \nonumber\\
    &= \bincount_{\binidx}^{(i)} \left(1 - \prob \left( \bar{\predicate}_1 \cap \bar{\predicate}_2 \cap \ldots \cap \bar{\predicate}_{\npredicates} \setsep \edge_{\binidx}^{(i)} \le \colvar < \edge_{\binidx + 1}^{(i)} \right) \right) \nonumber\\
    &= \bincount_{\binidx}^{(i)} \left(1 - \prod_{\predicateidx=1}^{\npredicates} \left( 1 - \prob \left( \predicate_{\predicateidx} \setsep \edge_{\binidx}^{(i)} \le \colvar < \edge_{\binidx + 1}^{(i)} \right) \right)\right).
\end{align}
We then observe that
\begin{equation}
    \prob \left( \predicate_{\predicateidx} \setsep \edge_{\binidx}^{(i)} \le \colvar < \edge_{\binidx + 1}^{(i)} \right)
    = \frac{1}{\bincount^{(i)}_{\binidx}} \left[ \bincounts^{(ij)} \coverages^{(j)} \right]_{\binidx},
\end{equation}
which allows us to express weightings as follows:
\begin{align} \label{eq:weightings}
    \weightings^{(i)} =
    \begin{cases}
        \prod\limits_{\predicateidx=1}^{\npredicates} \bincounts^{(ij)} \coverages^{(j)} \hadamarddivision \bincounts^{(i)}, &\text{intersection} \\
        1 - \prod\limits_{\predicateidx=1}^{\npredicates} \left(1 - \bincounts^{(ij)} \coverages^{(j)} \hadamarddivision \bincounts^{(i)} \right), &\text{union} \\
    \end{cases}
\end{align}
where the product ($ \Pi $) is element-wise (Hadamard).
By combining these results, arbitrary combinations of AND and OR relations can be processed.
Note that due to assuming conditional independence, \autoref{eq:weightings} may not be reliable for highly correlated data.
This is especially true for multiple conditions on the same column, which are clearly not (conditionally) independent, and thus why we perform delayed transformation to consolidate such conditions prior to computing bin weightings.

Weightings bounds, $ \weightings^{-} $ and $ \weightings^{+} $, are computed using \autoref{eq:weightings}, using low and high coverage bounds, respectively.
If \histname{} is constructed from a data sample, these bounds are widened to account for additional uncertainty due to sampling.
That is, the bounds are replaced by their outer two-sided 98-percentile confidence interval bounds.
Variance is estimated according to the Binomial distribution as $ \coverage_{\binidx} (1 - \coverage_{\binidx}) $ where $ \coverage_{\binidx} = \weighting_{\binidx} / \bincount_{\binidx} $.
Including compensation for finite population size, the weightings bounds are updated as follows:
\begin{align}
    \weighting_{\binidx}^{\pm} \leftarrow \weighting_{\binidx}^{\pm} \pm \ciweight \sqrt{ \coverage_{\binidx}^{\pm} \left(1 - \coverage_{\binidx}^{\pm} \right) \frac{\nrows - \nsamples}{\nrows - 1} },
\end{align}
where $ \ciweight $ is the value of the standard normal density function corresponding to a two-sided 98-percentile confidence interval.

\subsection{Aggregation}
\label{sec:query_execution:aggregation}

This section describes the mathematical formulations for query aggregation, which are listed in \autoref{tab:aggregations}, along with corresponding bounds.
Currently, seven aggregation functions are supported, namely COUNT, SUM, AVG, MIN, MAX, MEDIAN and VAR.

\subsubsection{COUNT}

This can be estimated by summing the weightings vector and then dividing by the sampling ratio, $ \sr $.

\subsubsection{SUM}

This can be estimated as the weighted sum of bin midpoints, $ \binmidpoints $, using the estimated bin weightings, $ \weightings $.
As with COUNT, the result must also be scaled by the sampling ratio, $ \sr $.

\subsubsection{AVG}

This can be estimated as the weighted mean of bin midpoints, $ \binmidpoints $, using the estimated bin weightings, $ \weightings $.
The bounds follow the same formulation, but naturally use the appropriate bounds for bin midpoints.
Additionally, each bound is estimated using both bin weightings extrema and either the minimum or maximum is returned.
For example, the lower bound is the minimum over $ \weightings^{-} $ and $ \weightings^{+} $ where $ \weightingsplaceholder $ is a placeholder variable.

\subsubsection{MIN}

Estimating MIN requires identifying the index, $ \binkey $, of the first bin with non-zero weighting, i.e.,
\begin{align}
    \binkey = \min_{\binidx}\{\binidx \text{, s.t. } \weighting_{\binidx} > 0\} .
\end{align}
In most cases, the estimate is the minimum value for this bin, $ \binmin{\binkey}{} $.
For the special case where the query involves only a single column (for aggregation and all predicates), there are only two unique values in the $ \binkey $th bin (i.e., $ \binuniquecount_{\binkey} = 2 $) and the bin coverage is less than half (i.e., $ \weighting_{\binkey} < \bincount_{\binkey} / 2 $), then the bin maximum, $ \binmax{\binkey}{} $, is a better estimate.

The lower bound follows the same formulation, except that $ \binkey $ must be determined from the bin weightings upper bound (to include the maximum range), i.e.,
\begin{align}
    \binkey = \min_{\binidx}\{\binidx \text{, s.t. } \weighting^{+}_{\binidx} > 0\} \text{\hspace{2em} (lower bound)}.
\end{align}
Conversely, the bin corresponding to the upper bound is identified using the bin weightings lower bound.
A higher threshold is used to ensure a higher likelihood of non-zero true bin coverage, i.e.,
\begin{align}
    \binkey = \min_{\binidx}\{\binidx \text{, s.t. } \weighting^{-}_{\binidx} > 1/2\} \text{\hspace{2em} (upper bound)}.
\end{align}
A tighter upper bound can be obtained for queries involving a single column in certain cases by considering the sub-bins.
That is, if the relevant bin passed the uniformity hypothesis test, and hence $ \bincount \ge \minpts $, then we may assume that data is uniformly distributed across the sub-bins.
Thus, we can compute the number of sub-bins that are fully covered as $ \nsubbinsfullycovered = \floor{ \nsubbins \weighting_{\binkey}^{-} / \bincount_{\binkey} } $ and reduce the upper bound by this number of sub-bin widths, $ \subbinwidth_{\binkey} $.

\subsubsection{MAX}

This is the inverse of MIN.
That is, we identify the index, $ \binkey $, of the \textit{last} bin with non-zero weighting, i.e.,
\begin{align}
    \binkey = \max_{\binidx}\{\binidx \text{, s.t. } \weighting_{\binidx} > 0\},
\end{align}
and use this to determine the estimates.
Similar, but inverse, formulations apply for the lower and upper bounds.

\subsubsection{MEDIAN}

This is estimated by first identifying the index, $ \binkey $, of the bin that contains the median.
That is,
\begin{equation}
    \binkey = \min_{\binidx} \left\{ \binidx, \text{ s.t. } \sum_{\tau=0}^{\binidx} \weighting_{\tau} \ge \frac{1}{2} \lonenorm{\weightings} \right\} .
\end{equation}
The estimate is then the bin minimum, $ \binmin{\binkey}{} $ plus the bin width, $ \binwidth_{\binkey} = \binmax{\binkey}{} - \binmin{\binkey}{} $, multiplied by the fraction of the bin weighting that is below the median, $ \fractomedian $, which is calculated as follows:
\begin{equation}
    \fractomedian = \frac{1}{\weighting_{\binkey}} \left( \frac{1}{2} \norm{\weightings}_{1} - \sum\limits_{\binidx=0}^{\binkey - 1} \weighting_{\binidx}  \right)
\end{equation}
If the bin contains only two unique values (i.e., $ \binuniquecount_{\binkey} = 2 $), the bin minimum or maximum is returned instead, depending on the value of $ \fractomedian $.

MEDIAN bounds require identifying the minimum and maximum bin indices that could correspond to the median, i.e.,
\begin{align}
    \binkey &=
    \min_{\{\weightings^{-}, \ \weightings^{+}\}} \left\{
        \min_{\binidx} \left\{ \binidx, \text{ s.t. } \sum_{\tau=0}^{\binidx} \weightingplaceholder_{\tau} \ge \frac{1}{2} \right\}
    \right\} \text{\hspace{0.5em} (lower bound)} \\
    \binkey &=
    \max_{\{\weightings^{-}, \ \weightings^{+}\}} \left\{
        \min_{\binidx} \left\{ \binidx, \text{ s.t. } \sum_{\tau=0}^{\binidx} \weightingplaceholder_{\tau} \ge \frac{1}{2} \right\}
    \right\} \text{\hspace{0.5em} (upper bound)}
\end{align}

\subsubsection{VAR}

This is estimated as the difference between the weighted mean square value and the square of the estimated mean.
The idea for the bounds is to assume that all points within a bin are either as far from the mean as possible (upper bound) or as close to the mean as possible (lower bound).
This is expressed using $ \varboundsxs^{-} $ and $ \varboundsxs^{+} $, which represent the (assumed) location of points within each bin for the purpose of estimating the bounds and are defined as follows:
\begin{align}
    \varboundsx^{-}_{\binidx} &=
    \begin{cases}
        \binmax{\binidx}{}, & \binmax{\binidx}{} < AVG \\
        \binmin{\binidx}{}, & \binmin{\binidx}{} > AVG \\
        AVG, & \text{otherwise,}
    \end{cases} \\
    \varboundsx^{+}_{\binidx} &=
    \begin{cases}
        \binmin{\binidx}{}, & | AVG - \binmin{\binidx}{} | > | \binmax{\binidx}{} - AVG | \\
        \binmax{\binidx}{}, & \text{otherwise.} \\
    \end{cases}
\end{align}
where $ AVG $ is the estimated mean.
Thus, for the lower bound, bins are represented by whichever of the bin minimum or maximum is closest to the estimated mean, while the bin that straddles the mean is represented by the mean itself.
Conversely, for the upper bound, bins are represented by whichever of the bin minimum or maximum is furthest from the mean.
As with AVG and MEDIAN queries, the bounds are evaluated for both weightings extrema and the appropriate min/max is returned.

	\section{Performance Evaluation}
\label{sec:evaluation}

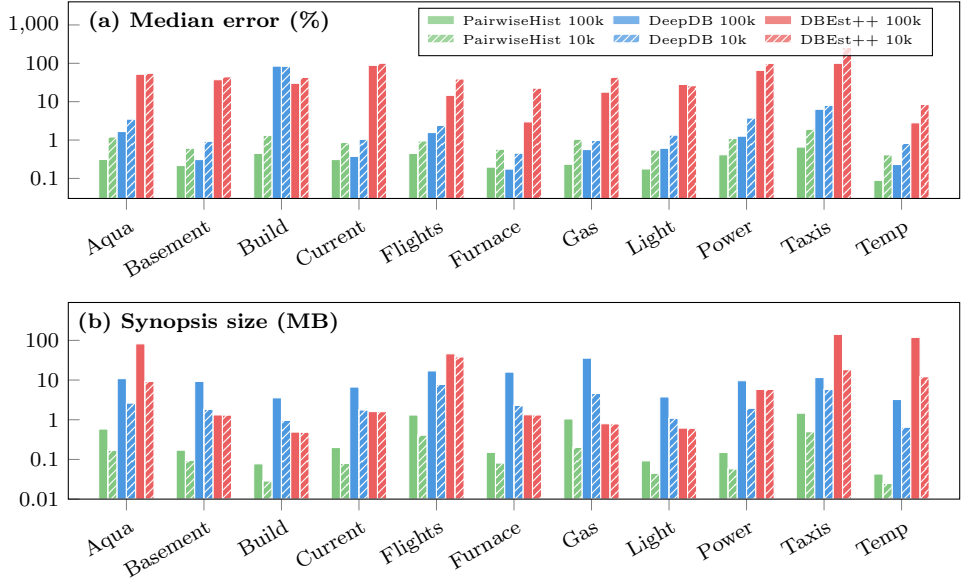
\begin{figure*}[!t]
    \centering
    \begin{tikzpicture}
    \begin{semilogyaxis}[
        dataset grouped bar plot,
        ymin = 3e-2, ymax=4e3,
        ytick = {0.1, 1, 10, 100, 1000},
        at = {(0,0)},
        name = accuracy,
        ]
        \foreach \i in {1,2,3,4,5,6} {
            \addplot+ [] table[x index=0, y index=\i] {\datarealdatasetsacccuracy};
        }
        \legend{\histname{} 100k, \histname{} 10k, DeepDB 100k, DeepDB 10k, DBEst++ 100k, DBEst++ 10k}
    \end{semilogyaxis}

    \begin{semilogyaxis}[
        dataset grouped bar plot,
        ymin = 1e-2, ymax=9e2,
        ytick = {0.01, 0.1, 1, 10, 100},
        at = {(0,0)},
        yshift = -4cm,
        name = storage,
        ]
        \foreach \i in {1,2,3,4,5,6} {
            \addplot+ [] table[x index=0, y index=\i] {\datarealdatasetsstorage};
        }
    \end{semilogyaxis}

    \node[anchor=north west, font=\footnotesize, xshift=0.4em] () at (accuracy.north west) {\textbf{(a) Median error (\%)}};
    \node[anchor=north west, font=\footnotesize] () at (storage.north west) {\textbf{(b) Synopsis size (MB)}};
\end{tikzpicture}
    \caption{
        Error performance and storage requirements across 11 real-world datasets.
    }
    \label{fig:real_datasets_results}
\end{figure*}

To evaluate \histname{}, 11 real-world datasets were used, which are summarised in \autoref{tab:datasets}.
The \textit{Basement}, \textit{Current} and \textit{Furnace} datasets~\citep{Makonin_2016,Makonin_2016a} contain electrical meter data for different areas of a house.
\textit{Gas}~\citep{Huerta_2016}, \textit{Light}~\citep{Prasad_2020}, \textit{Power}~\citep{Hebrail_2012} and \textit{Temp}~\citep{Porter_2021} contain multifaceted IoT sensor data from a single source, while \textit{Aqua}~\citep{Udanor_2022} and \textit{Build}~\citep{Hong_2017} contain IoT sensor data combined from multiple sources (aquaponics ponds and building rooms, respectively) with several data columns each and a shared timestamp column.
\textit{Aqua} and \textit{Build} thus contain many null values due to asynchronous sampling.
\textit{Flights}~\cite{UDT_2016} and \textit{Taxi}~\citep{CityChicago_2022c} contain records of individual trips and include several categorical fields, as well as missing values.
\textit{Flights} is commonly used in AQP literature (e.g.~\cite{Ma_2021,Hilprecht_2020,Kulessa_2018,Eichmann_2020,Sheoran_2022,Garg_2023}), while \textit{Power} is used occasionally (e.g.~\cite{Zhang_2021a,Zhang_2021}).
Overall, these datasets encompass a wide variety of data types, dimensionality, sizes and both sensor and non-sensor data.


\begin{table*}[!t]
    \centering
    \caption{
        Datasets used for evaluation
    }
    \hspace{\tablecaptiongapskrinkage}
    \footnotesize
    \begin{tabular}{llrrr}
    \toprule
    \multicolumn{2}{l}{\textbf{Dataset}}                                       & \textbf{Rows} & \textbf{Columns} &                                                       \textbf{Size} (\unit{MB}) \\ \midrule
    Aqua          & Aquaponics sensors~\citep{Udanor_2022}                     &      913\,465 &               13 &                                                                            66.7 \\
    Basement      & Basement power~\citep{Makonin_2016,Makonin_2016a}          &   1\,051\,200 &               12 &                                                                            50.5 \\
    Build         & Smart building systems~\citep{Hong_2017}                   &  14\,381\,639 &                7 &                                                                           402.7 \\
    Current       & Electric meters current~\citep{Makonin_2016,Makonin_2016a} &   1\,051\,200 &               24 &                                                                           100.9 \\
    Flights$ ^* $ & Flight delays \& cancellations~\cite{UDT_2016}             &   5\,819\,079 &               32 &                                                                           756.5 \\
    Furnace       & Furnace power~\citep{Makonin_2016,Makonin_2016a}           &   1\,051\,200 &               12 &                                                                            50.5 \\
    Gas           & Home gas sensor~\citep{Huerta_2016}                        &      928\,991 &               12 &                                                                            44.6 \\
    Light         & IoT light detection~\citep{Prasad_2020}                    &      405\,184 &                9 &                                                                            19.9 \\
    Power         & Home power consumption~\citep{Hebrail_2012}                &   2\,049\,280 &               10 &                                                                            82.0 \\
    Taxis         & Chicago taxi trips 2020~\citep{CityChicago_2022c}          &   3\,889\,032 &               23 &                                                                        1\,753.9 \\
    Temp          & Temperature sensor~\citep{Porter_2021}                     &  10\,553\,597 &                5 &                                                                           369.4 \\ \bottomrule
    \multicolumn{5}{p{0.6\textwidth}}{\scriptsize $ ^* $Other works typically use only 12 columns from the \textit{Flights} dataset (e.g.~\citep{Hilprecht_2020}). However, we use all 32 columns.}
\end{tabular}

    \label{tab:datasets}
\end{table*}


We implemented \histname{} in Python~3.11 and compared it to state-of-the-art AQP techniques DeepDB~\citep{Hilprecht_2020} and DBEst++~\citep{Ma_2021} using their corresponding Python implementations~\cite{Hilprecht_2020a,Ma_2022}.
These techniques were chosen for comparison due to their leading performance (see \autoref{tab:stateoftheart}) and code availability.


Our evaluation is divided into two parts: initial experiments on the original datasets and comprehensive experiments on scaled-up versions of the \textit{Power} and \textit{Flights} datasets.
For the initial experiments, we randomly generated 100 single-predicate queries for each dataset with aggregation functions COUNT, SUM and AVG and minimum selectivity of 10$ ^{-5} $.
For the scaled-up experiments, IDEBench~\citep{Eichmann_2020} was used to scale the datasets up to one billion rows, resulting in sizes of \qty{40}{GB} and \qty{130}{GB}, respectively.
We then randomly generated 445 and 427 test queries for \textit{Power} and \textit{Flights}, respectively, including all seven aggregation functions supported by \histname{}, 1--5 predicate conditions and minimum selectivity of 10$ ^{-6} $.
Due to aforementioned limitations of DeepDB and DBEst++ (\autoref{sec:related_work}), DeepDB supports only 146 queries (80 for \textit{Power} and 66 for \textit{Flights}), while DBEst++ supports just 86 queries (41 for \textit{Power} and 45 for \textit{Flights}).
Also, since DBEst++ requires many models to support different query templates (see \autoref{sec:related_work}), we include all DBEst++ models required to support the same queries as \histname{} when comparing synopsis size.


Our experimental setup consisted of an Intel(R) Xeon(R) Gold 6130 \qty{2.10}{GHz} CPU with \qty{24}{GB} RAM available during synopsis construction and \qty{6}{GB} during query execution.
Default parameters were used for DeepDB and DBEst++, as well as GreedyGD.
All experiments were performed with $ \minpts $ set to 1\% of $ \nsamples $ (e.g. $ \minpts=10^3 $ for $ \nsamples = 10^5 $) and $ \testsignificance $ set to 0.001.



\subsection{Initial experiments}

The median query error and synopsis size for each dataset is shown in \autoref{fig:real_datasets_results} for \histname{}, DeepDB and DBEst++ with 100k and 10k samples.
As can be seen in \autoref{fig:real_datasets_results}(a), \histname{} has the lowest error on 10 out of 11 datasets.
Indeed, even with a mere 10k samples, \histname{} outperforms DeepDB with 100k samples on 6 out of 11 datasets.
Overall, for 100k samples, \histname{} has a median error of just 0.28\%, compared to 0.73\% for DeepDB and 28.9\% for DBEst++.
In terms of synopsis size, \histname{} is typically 1--2 orders of magnitude smaller.
For 100k samples, the mean size for \histname{} is just \qty{0.48}{MB}, compared to \qty{11.5}{MB} for DeepDB and \qty{36.3}{MB} for DBEst++.
That is, \histname{} is at least 2.6$ \times $ as accurate while requiring 24$ \times $ less storage.

\subsection{Parameter sensitivity}
\label{sec:evaluation:parameters}

\begin{figure}[!t]
    \vspace{-1.0em}
    \centering
    \subfloat{\begin{tikzpicture}[trim axis right]
    \begin{axis}[
        parameter sensitivity plots,
        title = {\textbf{(a) Median error (\%)}},
        ytick = {0, 1, 2, 3, 4, 5},
        ymin=0, ymax=4,
        cycle list={
            {colour1!70, mark options={draw=white,solid,thick,scale=1.15,opacity=1}, mark=*},
            {colour2!50, mark options={draw=white,solid,thick,scale=1.0,opacity=1}, mark=square*, densely dotted},
            {colour2!70, mark options={draw=white,solid,thick,scale=1.0,opacity=1}, mark=square*},
            {colour2!30, mark options={draw=white,solid,thick,scale=1.0,opacity=1}, mark=square*, densely dashed},
        },
        clip mode=individual,
        ]
        \addplot+[]  
        coordinates {
            (1000, 0.445)
            (2000, 0.446)
            (3000, 0.431)
            (4000, 0.429)
            (5000, 0.454)
            (6000, 0.467)
            (7000, 0.445)
            (8000, 0.432)
            (9000, 0.424)
            (10000, 0.387)
        };
        
        \addplot+[]  
        coordinates {
            (1000, 1.127)
            (2000, 1.076)
            (3000, 1.369)
            (4000, 1.709)
            (5000, 1.763)
            (6000, 1.763)
            (7000, 1.709)
            (8000, 2.121)
            (9000, 2.432)
            (10000, 2.582)
        };
        \addplot+[]  
        coordinates {
            (1000, 1.171)
            (2000, 1.085)
            (3000, 1.314)
            (4000, 1.706)
            (5000, 1.763)
            (6000, 1.768)
            (7000, 1.709)
            (8000, 1.983)
            (9000, 2.392)
            (10000, 2.469)
        };
        \addplot+[]  
        coordinates {
            (1000, 1.040)
            (2000, 0.990)
            (3000, 1.198)
            (4000, 1.626)
            (5000, 1.709)
            (6000, 1.713)
            (7000, 1.701)
            (8000, 1.770)
            (9000, 2.180)
            (10000, 2.262)
        };

        \legend{
            {1m, \scalebox{0.85}{$ \testsignificance $}=0.01},
            {100k, \scalebox{0.85}{$ \testsignificance $}=0.001},
            {100k, \scalebox{0.85}{$ \testsignificance $}=0.01},
            {100k, \scalebox{0.85}{$ \testsignificance $}=0.1},
        }
    \end{axis}
\end{tikzpicture}}
    \hfill
    \subfloat{\begin{tikzpicture}[trim axis right]
    \begin{semilogyaxis}[
        parameter sensitivity plots,
        title = {\textbf{(b) Synopsis size (MB)}},
        ytick = {0.01, 0.1, 1, 10, 100, 1000},
        ymin=0.04, ymax=200,
        log origin=infty,  
        log ticks with fixed point,
        legend pos=north east,
        ]
        
        \addplot+[]  
        coordinates {
            (1000, 35.73)
            (2000, 24.53)
            (3000, 14.89)
            (4000, 10.44)
            (5000, 7.43)
            (6000, 5.32)
            (7000, 4.66)
            (8000, 4.06)
            (9000, 3.57)
            (10000, 2.91)
        };
        
        \addplot+[]  
        coordinates {
            (1000, 1.45)
            (2000, 0.65)
            (3000, 0.43)
            (4000, 0.29)
            (5000, 0.24)
            (6000, 0.21)
            (7000, 0.18)
            (8000, 0.14)
            (9000, 0.13)
            (10000, 0.12)
        };

        \legend{
            {1m, \scalebox{0.85}{$ \testsignificance $}=0.01},
            {100k, \scalebox{0.85}{$ \testsignificance $}=0.01}
        }

    \end{semilogyaxis}
\end{tikzpicture}}
    \vspace{-0.4em}
    \caption{
        \histname{} performance on the scaled-up \textit{Flights} dataset for different parameter sets.
    }
    \label{fig:parameter_sensitivity}
\end{figure}
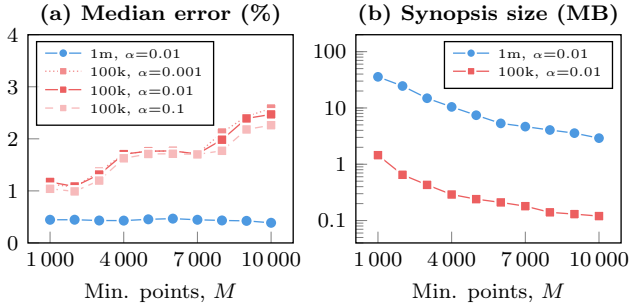

In general, higher $ \nsamples $, higher $ \testsignificance $ and lower $ \minpts $ all correspond to higher accuracy at the cost of larger synopsis size and longer construction time.
However, $ \nsamples $ has greatest impact on performance, while $ \testsignificance $ has minimal impact.
This is illustrated in \autoref{fig:parameter_sensitivity} for the scaled-up \textit{Flights} dataset, where $ \testsignificance $ has near-zero impact except on accuracy for $ \nsamples $ = 100k in some cases.
In our tests, we have also observed that construction time scales linearly with $ \nsamples $ and query latency is largely consistent across different parameter sets.


\subsection{Accuracy}
\label{sec:evaluation:query_error}

\begin{table*}[!t]
    \centering
    \caption{
        Median relative error (\%).
    }
    \hspace{\tablecaptiongapskrinkage}
    \footnotesize
    \begin{tabular}{lcccccc}
    \toprule
                & \multicolumn{3}{c}{\textbf{\textit{Power} dataset}} & \multicolumn{3}{c}{\textbf{\textit{Flights} dataset}} \\
    Aggregation &           PH           &    DeepDB     &  DBEst++   &      PH       &    DeepDB     &        DBEst++        \\ \midrule
    COUNT       &          0.19          & \textbf{0.05} &   24.82    & \textbf{0.38} &     0.41      &         21.65         \\
    SUM         &     \textbf{0.32}      &     14.18     &   56.46    & \textbf{1.15} &     1.72      &         3.55          \\
    AVG         &     \textbf{0.42}      &     0.50      &   17.86    &     0.39      & \textbf{0.28} &         16.95         \\
    VAR         &     \textbf{0.84}      &       -       &   98.50    & \textbf{1.67} &       -       &        100.00         \\
    MIN         &     \textbf{0.00}      &       -       &     -      & \textbf{0.00} &       -       &           -           \\
    MAX         &     \textbf{1.25}      &       -       &     -      & \textbf{4.41} &       -       &           -           \\
    MEDIAN      &     \textbf{0.00}      &       -       &     -      & \textbf{0.29} &       -       &           -           \\ \midrule
    Overall     & \textbf{\textbf{0.20}} &     0.45      &   56.46    & \textbf{0.43} &     0.64      &         28.42         \\ \bottomrule
    \multicolumn{7}{l}{\tiny PH = \histname{}: 1 million samples, DeepDB: 1 million samples, DBEst++: 100k samples.}
\end{tabular}

    \label{tab:error}
\end{table*}

\begin{figure*}[!t]
    \centering
    \subfloat{\adjustbox{valign=t, margin=-0.6em}{\begin{tikzpicture}
    \begin{semilogxaxis}[
        cdf plots,
        title = {\textbf{(b) DeepDB ($ n=146 $)}},
        ]
        \foreach \i in {2,1,3,4} {
            \addplot+ [
            sharp plot, no marks, very thick
            ] table[x index=0, y index=\i] {\datacdfdeepdb};
        }

        \draw[black!50, densely dotted] (axis cs:10,0) -- (axis cs:10,82.5);
        \node[circle, inner sep=0.15em, draw, white, fill=colour4, line width=0.5pt] (dot) at (axis cs:10,82.5) {};
        \node[anchor=south east, inner sep=0.0em] () at (dot.north west) {\scriptsize 82.5\%};

        \legend{\histname{} 1m, \histname{} 100k, DeepDB 1m, DeepDB 100k}
    \end{semilogxaxis}
\end{tikzpicture}}}
    \hfill
    \subfloat{\adjustbox{valign=t, margin=-0.6em}{\begin{tikzpicture}
    \begin{semilogxaxis}[
        cdf plots,
        title = {\textbf{(a) DBEst++  ($ n=86 $)}},
        cycle list={
            {colour3!80, thick},
            {colour3!80, thick, densely dashed},
            {colour2!80, thick},
            {colour2!80, thick, densely dashed},
        },
        ylabel={},
        ]
        \foreach \i in {2,1,3,4} {
            \addplot+ [
            sharp plot, no marks, very thick
            ] table[x index=0, y index=\i] {\datacdfdbestpp};
        }

        \draw[black!50, densely dotted] (axis cs:10,0) -- (axis cs:10,95.1);
        \node[circle, inner sep=0.15em, draw, white, fill=colour4, line width=0.5pt] (dot) at (axis cs:10,95.1) {};
        \node[anchor=north west] () at (dot.south west) {\scriptsize 95.1\%};

        \legend{\histname{} 1m, \histname{} 100k, DBEst++ 100k, DBEst++ 10k}
    \end{semilogxaxis}
\end{tikzpicture}}}
    \hfill
    \subfloat{\adjustbox{valign=t, margin=-0.6em}{\begin{tikzpicture}
    \begin{semilogxaxis}[
        cdf plots,
        title = {\textbf{(c) All queries ($ n=872 $)}},
        ylabel={},
        ]
        \foreach \i in {2,1} {
            \addplot+ [
            sharp plot, no marks, very thick
            ] table[x index=0, y index=\i] {\datacdfall};
        }

        \draw[black!50, densely dotted] (axis cs:10,0) -- (axis cs:10,85.1);
        \node[circle, inner sep=0.15em, draw, white, fill=colour4, line width=0.5pt] (dot) at (axis cs:10,85.1) {};
        \node[anchor=north west, inner sep=0.3em] () at (dot.south) {\scriptsize 85.1\%};

        \legend{\histname{} 1m, \histname{} 100k}
    \end{semilogxaxis}
\end{tikzpicture}}}
    \hfill
    \subfloat{\adjustbox{valign=t, margin=-0.6em}{\newcommand{\figrealcompbarwidth}{7pt}

\begin{tikzpicture}[trim axis right]
    \begin{axis}[
        my bar plot,
        width = 0.21*\textwidth, height = 0.16*\textwidth,
        enlarge x limits=0.2,
        bar width = \figrealcompbarwidth,
        ylabel = {Median error (\%)},
        ytick = {0,1,2,3,4},
        xtick=data,
        xtick = {1,2,3,4},
        xticklabels = {Power, Flights, Power, Flights},
        title = {\textbf{(d) Real vs. IDEBench}$ ^* $},
        legend image post style={scale=0.8},
        name = graph,
        ]

        \addplot+[bar shift=-0.5*\figrealcompbarwidth] coordinates {(1, 0.125914) (2, 0.660598)};
        \addplot+[bar shift=0.5*\figrealcompbarwidth] coordinates {(1, 0.132675) (2, 0.345782)};

        \addplot+[bar shift=-0.5*\figrealcompbarwidth] coordinates {(3, 4.186534) (4, 3.370849)};
        \addplot+[bar shift=0.5*\figrealcompbarwidth] coordinates {(3, 0.737420) (4, 0.752934)};

        \legend{\histname{} Real, \histname{} IDEBench, DeepDB Real, DeepDB IDEBench}
    \end{axis}

    \node[anchor=north east, yshift=-2.1em, inner sep=0] () at (graph.south east) {\tiny $ ^* $1m sample size used for all models};
\end{tikzpicture}}}
    \caption{
        (a)--(c) CDF plots of query error for different subsets of queries across both datasets, (d) query performance on the original real-world data and IDEBench-generated data of the same size.
    }
    \label{fig:error}
\end{figure*}
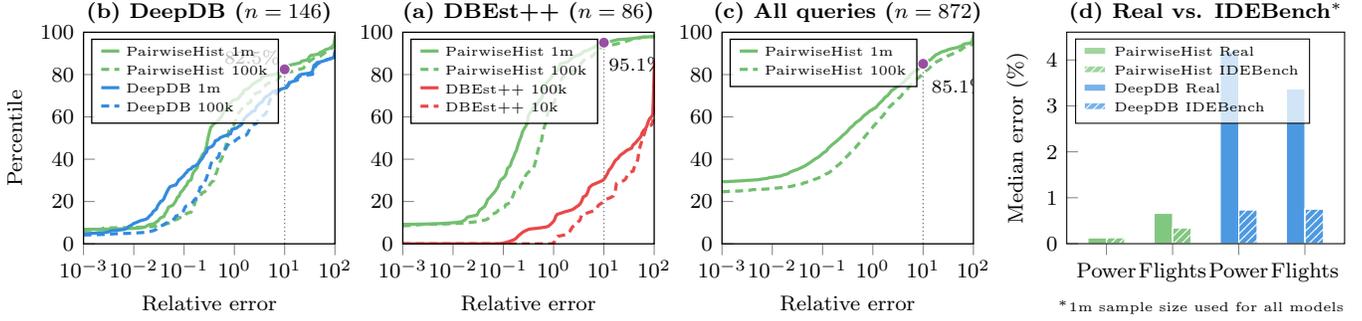

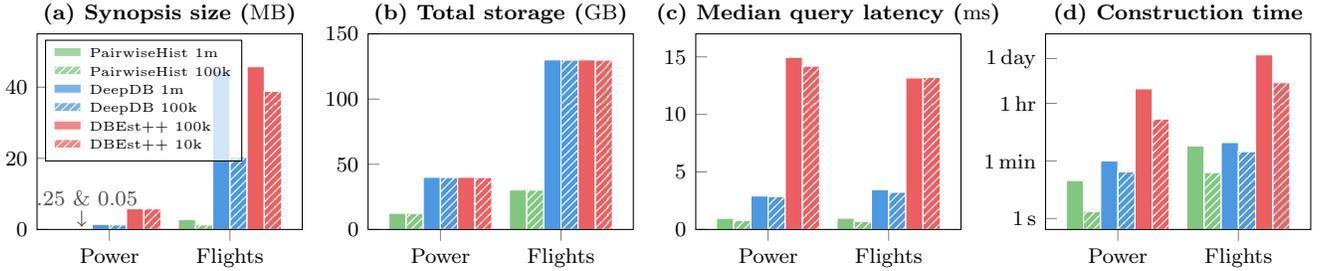
\begin{figure*}[!t]
    \centering
    \subfloat{\begin{tikzpicture}[trim axis right]
    \begin{axis}[
        evaluation bar plots,
        title = {\textbf{(a) Synopsis size (\unit{MB})}},
        ymax=55,
        ]

        \addplot+[] coordinates {(1, 0.25) (2, 2.76)};
        \addplot+[] coordinates {(1, 0.05) (2, 1.31)};
        \addplot+[] coordinates {(1, 1.43) (2, 44.52)};
        \addplot+[] coordinates {(1, 1.31) (2, 20.42)};
        \addplot+[] coordinates {(1, 5.81) (2, 45.76)};
        \addplot+[] coordinates {(1, 5.81) (2, 38.90)};

        \node[anchor=center, inner sep=0.2em, black!70] (annotation1) at (axis cs:0.77,9) {0.25 \& 0.05};
        \draw[->, black!70] (annotation1.south) -- (axis cs:0.77,0.8);

        \legend{\histname{} 1m, \histname{} 100k, DeepDB 1m, DeepDB 100k, DBEst++ 100k, DBEst++ 10k}
    \end{axis}
\end{tikzpicture}}
    \hfill
    \subfloat{\begin{tikzpicture}[trim axis right]
    \begin{axis}[
        evaluation bar plots,
        title = {\textbf{(b) Total storage (\unit{GB})}},
        ymax=150,
        ]

        \addplot+[] coordinates {(1, 12.411518) (2, 30.267561)};
        \addplot+[] coordinates {(1, 12.411314) (2, 30.264094)};
        \addplot+[] coordinates {(1, 40.001427) (2, 130.044515)};
        \addplot+[] coordinates {(1, 40.001314) (2, 130.020416)};
        \addplot+[] coordinates {(1, 40.005815) (2, 130.045759)};
        \addplot+[] coordinates {(1, 40.005806) (2, 130.038904)};

    \end{axis}
\end{tikzpicture}}
    \hfill
    \subfloat{\begin{tikzpicture}[trim axis right]
    \begin{axis}[
        evaluation bar plots,
        title = {\textbf{(c) Median query latency (\unit{ms})}},
        ymax=17,
        ]

        \addplot+[] coordinates {(1, 0.97) (2, 0.98)};  
        \addplot+[] coordinates {(1, 0.78) (2, 0.71)};  
        \addplot+[] coordinates {(1, 2.91) (2, 3.45)};  
        \addplot+[] coordinates {(1, 2.87) (2, 3.25)};  
        \addplot+[] coordinates {(1, 14.96) (2, 13.17)};  
        \addplot+[] coordinates {(1, 14.21) (2, 13.21)};  
    \end{axis}
\end{tikzpicture}}
    \hfill
    \subfloat{\begin{tikzpicture}[trim axis right]
    \begin{semilogyaxis}[
        evaluation bar plots,
        title = {\textbf{(d) Construction time}},
        ymax = 500000,
        ytick = {1, 60, 3600, 24*3600},
        yticklabels = {\qty{1}{s}, \qty{1}{min}, \qty{1}{hr}, \qty{1}{day}}
        ]

        \addplot+[] coordinates {(1, 14.819) (2, 173.402)};
        \addplot+[] coordinates {(1, 1.622) (2, 26.614)};
        \addplot+[] coordinates {(1, 59.588) (2, 220.581)};
        \addplot+[] coordinates {(1, 27.569) (2, 117.197)};
        \addplot+[] coordinates {(1, 10043.349) (2, 112343.206)};
        \addplot+[] coordinates {(1, 1194.751) (2, 15673.884)};

    \end{semilogyaxis}
\end{tikzpicture}}
    \vspace{-0.5em}
    \caption{
        Storage and runtime performance comparison on the scaled-up datasets.
    }
    \label{fig:evaluation_other}
\end{figure*}

\textit{Query Accuracy.}
\autoref{tab:error} presents the median error by dataset and aggregation function for the scaled-up experiments.
As can be seen, \histname{} (denoted PH) performs well across all aggregation functions and delivers between 1.5--2.3$ \times $ better overall accuracy than DeepDB with median errors of just 0.20\% and 0.43\% compared to 0.45\% and 0.64\% for the two datasets.
Note that a smaller sample size was used for DBEst++ were due to its prohibitively long training time (see \autoref{sec:evaluation:construction}).

We also compared the distribution of query errors over the subset of queries supported by DeepDB (\autoref{fig:error}(a)) and DBEst++ (\autoref{fig:error}(b)) as CDF plots, while the error distribution for \histname{} over all queries is shown in (\autoref{fig:error}(c)).
As can be seen, \histname{} provides a better error distribution in each case.
Indeed, with just 100k samples, \histname{} achieves a higher probability of sub-1\% error than DeepDB with 1 million samples.
Overall, 85.1\% of queries have sub-10\% error with \histname{}, as highlighted in \autoref{fig:error}(c).

During our evaluation, DeepDB was observed to perform significantly worse on real-world data compared to IDEBench-generated synthetic data.
To demonstrate this, we generated synthetic versions of the \textit{Power} and \textit{Flights} datasets using IDEBench with the same number of rows as the original data and tested identical queries on them using DeepDB and \histname{}.
The resulting median errors are shown in \autoref{fig:error}(d).
As can be seen, DeepDB performs far worse on the real data compared to the IDEBench-generated data.
IDEBench generates synthetic data by applying normalisation and Gaussian models.
This suggests that, while DeepDB performs well on standard test data, it may not perform as effectively in the real world, where data is less well-behaved.
\histname{}, on the other hand, performs consistently well and has up to 31$ \times $ lower error than DeepDB on the real datasets.

\begin{table*}[!t]
    \centering
    \caption{Bounds accuracy rate and width.}
    \hspace{\tablecaptiongapskrinkage}
    \footnotesize
    \begin{tabular}{lcccc}
    \toprule
                        & \multicolumn{2}{c}{\textbf{Correct rate (\%)}} & \multicolumn{2}{c}{\textbf{Width (\%)}} \\
    Dataset             &  \histname{}  &             DeepDB             & \histname{} &          DeepDB           \\ \midrule
    Power (original)    & \textbf{70.0} &              40.0              &     4.4     &       \textbf{0.7}        \\
    Power (1 billion)   & \textbf{80.0} &              51.2              &     3.4     &       \textbf{0.6}        \\
    Flights (original)  & \textbf{78.8} &              50.0              &     8.7     &       \textbf{3.0}        \\
    Flights (1 billion) & \textbf{78.8} &              75.8              &     4.3     &       \textbf{2.3}        \\ \bottomrule
\end{tabular}

    \label{tab:bounds}
\end{table*}

\textit{Query Bounds.}
Query bounds should be both accurate (i.e., contain the true result) and narrow.
\autoref{tab:bounds} lists the percentage of queries for which DeepDB and \histname{} provide accurate bounds and the median bound widths as a percentage of the exact result for the subset of queries supported by DeepDB (DBEst++ does not provide bounds).
A significance value of 0.99 was used for DeepDB.
As shown, \histname{} provides more accurate bounds than DeepDB, especially for the real-world datasets.
DeepDB has consistently narrower bounds, but given their lower accuracy, this may indicate that its bounds are overly optimistic.

\subsection{Storage requirements}
\label{sec:evaluation:storage}

Synopsis sizes are shown in \autoref{fig:evaluation_other}(a).
As can be seen, \histname{} requires the least storage in all cases and is at least 11$ \times $ smaller than DeepDB and DBEst++ (\qty{0.25}{MB} vs. \qty{2.75}{MB} for \textit{Power} with 1m samples).
Additionally, with \histname{} built directly on compressed data, total storage requirements are reduced, since data can be permanently stored in compressed format.
Total storage requirements are shown in \autoref{fig:evaluation_other}(b), where \histname{} delivers savings of 3.2--4.3$ \times $.

\subsection{Query latency}
\label{sec:evaluation:latency}

Median query latency, including the time to calculate query bounds (for \histname{} and DeepDB), is shown in \autoref{fig:evaluation_other}(c).
As can be seen, \histname{} is significantly faster than the state-of-the-art approaches with an overall median latency of just \qty{0.94}{ms} (for 1m samples), which is 3.5$ \times $ faster than DeepDB and 15$ \times $ faster than DBEst++.
\histname{}'s low latency can be attributed to most aggregations requiring just a handful of small matrix multiplications.
The corresponding median latency for exact query processing using SQLite, which we used for the ground truth, was \qty{306.8}{seconds}, which makes \histname{} over 300\,000$ \times $ faster.


\subsection{Construction time}
\label{sec:evaluation:construction}

Finally, synopsis construction times are displayed in \autoref{fig:evaluation_other}(d).
As can be seen, \histname{} consistently requires the least time, being 1.2--4$ \times $ faster than DeepDB, while DBEst++ is more than two orders of magnitude slower.
Indeed, for the Flights dataset, DBEst++ requires over \qty{30}{hours} for 100k samples, while \histname{} requires less than \qty{3}{minutes} with 1 million samples.

	\section{Conclusion}
\label{sec:conclustion}

In this paper, we propose a novel AQP technique called \histname{} and a novel framework for AQP that leverages data compression to reduce overall storage requirements.
By using a collection of histograms approach, efficient storage encoding and various query execution optimisations, we are able to simultaneously deliver significant improvements in terms of accuracy, latency, synopsis size and construction time compared to state-of-the-art AQP methods.
In future work, we intend to investigate histogram updates, online refinement and multi-table support.

	\section{Acknowledgements}

This work is supported by the Analytics Straight on Compressed IoT Data (Light-IoT) project (Grant No. 0136-00376B), granted by the Danish Council for Independent Research, and Aarhus University's DIGIT Centre.
Computation was partially performed on the UCloud system, which is managed by the eScience Center at the University of Southern Denmark.

    \bibliographystyle{IEEEtranN}
	\bibliography{library}

\end{document}